\newcommand{\R}{{\mathbb{R}}}
\newtheorem{theorem}{Theorem}
\newtheorem{lemma}{Lemma}
\newtheorem{proposition}{Proposition}
\newcommand{\E}{{\mathbb{E}}}
\newcommand{\statespace}{\mathcal{X}}
\newcommand{\schedule}{\mathcal{T}_N}
\newcommand{\LSKL}{\mathcal{L}_{\mathrm{SKL}}}
\newcommand{\states}{\textbf{x}}
\newcommand{\nscan}{M}
\newcommand{\Podd}{S_{\mathrm{odd}}}
\newcommand{\Peven}{S_{\mathrm{even}}}
\newcommand{\SKL}{\mathrm{SKL}}
\def\argmax{\operatornamewithlimits{arg\,max}}
\def\argmin{\operatornamewithlimits{arg\,min}}
\newcounter{xxx}
\newcommand{\Var}{\mathrm{Var}}
\newcommand{\ud}{\textrm{d}}
\newcommand\dee{{\mathrm{d}}}
\newtheorem{definition}[theorem]{Definition}
\icmltitlerunning{Parallel tempering on optimized paths}
\begin{document}

\twocolumn[
\icmltitle{Parallel tempering on optimized paths}

% It is OKAY to include author information, even for blind
% submissions: the style file will automatically remove it for you
% unless you've provided the [accepted] option to the icml2021
% package.

% List of affiliations: The first argument should be a (short)
% identifier you will use later to specify author affiliations
% Academic affiliations should list Department, University, City, Region, Country
% Industry affiliations should list Company, City, Region, Country

% You can specify symbols, otherwise they are numbered in order.
% Ideally, you should not use this facility. Affiliations will be numbered
% in order of appearance and this is the preferred way.
\icmlsetsymbol{equal}{*}

\begin{icmlauthorlist}
\icmlauthor{Saifuddin Syed}{equal,UBC}
\icmlauthor{Vittorio Romaniello}{equal,UBC}
\icmlauthor{Trevor Campbell}{UBC}
\icmlauthor{Alexandre Bouchard-C\^{o}t\'{e}}{UBC}
\end{icmlauthorlist}

\icmlaffiliation{UBC}{Department of Statistics, University of British Columbia, Vancouver, Canada}

\icmlcorrespondingauthor{Saifuddin Syed}{saif.syed@stat.ubc.ca}
\icmlcorrespondingauthor{Vittorio Romaniello}{vittorio.romaniello@stat.ubc.ca}

% You may provide any keywords that you
% find helpful for describing your paper; these are used to populate
% the "keywords" metadata in the PDF but will not be shown in the document
\icmlkeywords{Parallel tempering, bayesian inference, MCMC, information geometry}

\vskip 0.3in
]

% this must go after the closing bracket ] following \twocolumn[ ...

% This command actually creates the footnote in the first column
% listing the affiliations and the copyright notice.
% The command takes one argument, which is text to display at the start of the footnote.
% The \icmlEqualContribution command is standard text for equal contribution.
% Remove it (just {}) if you do not need this facility.

%\printAffiliationsAndNotice{}  % leave blank if no need to mention equal contribution
\printAffiliationsAndNotice{\icmlEqualContribution} % otherwise use the standard text.

\begin{abstract}
Parallel tempering (PT) is a class of Markov chain Monte Carlo algorithms 
that constructs a path of distributions annealing between a tractable
reference and an intractable target, and then interchanges states along the path to improve 
mixing in the target. 
The performance of PT depends on how quickly a sample from the reference
distribution makes its way to the target, which in turn depends on the particular
path of annealing distributions.
However, past work on PT has used only simple paths constructed from convex combinations
of the reference and target log-densities.
This paper begins by demonstrating that this path performs poorly
in the setting where the reference and target are nearly mutually singular.
To address this issue, we expand the framework of PT to general families of paths, 
formulate the choice of path as an optimization problem that admits tractable gradient estimates,
and propose a flexible new family of spline interpolation paths for use
in practice.
Theoretical and empirical results both demonstrate that our proposed methodology breaks 
previously-established upper performance limits for traditional paths.

%This document provides a basic paper template and submission guidelines.
%Abstracts must be a single paragraph, ideally between 4--6 sentences long.
%Gross violations will trigger corrections at the camera-ready phase.
\end{abstract}

\section{Introduction}
Markov Chain Monte Carlo (MCMC) methods are widely used to approximate
intractable expectations with respect to un-normalized probability
distributions over general state spaces. For hard problems, MCMC can suffer
from poor mixing. For example, faced with well-separated modes, MCMC methods often get
trapped exploring local regions of high probability. Parallel tempering (PT)
is a widely applicable methodology \citep{geyer1991markov} to tackle
poor mixing of MCMC algorithms. 

Suppose we seek to approximate an expectation with respect to an intractable
\emph{target density} $\pi_1$.
Denote by $\pi_0$ a \emph{reference density} defined on the same
space, which is assumed to be tractable in
the sense of the availability of an efficient sampler.
This work is motivated by the case where
$\pi_0$ and $\pi_1$ are nearly mutually singular.
A typical case is where the target is a Bayesian posterior distribution, 
the reference is the prior---for which i.i.d.~sampling is typically possible---and
the prior is misspecified.

PT methods are based on a specific continuum of densities 
$\pi_t\propto \pi_0^{1-t}\pi_1^t$, $t\in[0,1]$,
bridging $\pi_0$
and $\pi_1$.  This path of intermediate
distributions is known as the power posterior path in the literature, but  in
our framework it will be more natural to think of these continua as a
\emph{linear paths}, as they linearly interpolate between log-densities. PT
algorithms discretize the path at some $0=t_0<\dots<t_N=1$ to obtain a
sequence of densities $\pi_{t_0}, \pi_{t_1}, \dots, \pi_{t_N}$. See 
Figure~\ref{fig:gaussian_paths} (top) for an example of a linear path for two nearly mutually singular Gaussian
distributions.

Given the path discretization, PT involves running $N+1$ MCMC chains 
that together target the product distribution
$\pi_{t_0} \pi_{t_1} \cdots \pi_{t_N}$. Based on the assumption that the chain
$\pi_0$ can be sampled efficiently, PT uses swap-based interactions between
neighbouring chains to propagate the exploration done in $\pi_0$ into improved
exploration in the chain of interest $\pi_1$. By designing these swaps as Metropolis--Hastings
moves, PT guarantees that the marginal distribution of the $N^\text{th}$ chain
converges to $\pi_1$; and in practice, the rate of convergence is often much faster
compared to running a single chain \citep{woodard2009rapid}. PT algorithms are
extensively used in hard sampling problems arising in statistics,
physics, computational chemistry, phylogenetics, and machine learning
\citep{desjardins2014deep,ballnus2017comprehensive,kamberaj2020molecular,mueller2020adaptive}.

Notwithstanding empirical and theoretical successes, existing PT algorithms
also have well-understood theoretical limitations. Earlier work focusing on
the theoretical analysis of reversible variants of PT has shown that adding too
many intermediate chains can actually deteriorate performance
\citep{lingenheil2009efficiency, atchade_towards_2011}.  Recent work
\citep{syed_non_reversible_2019} has shown that a nonreversible variant of PT
\citep{okabe2001replica} is guaranteed to dominate its classical reversible
counterpart, and moreover that in the nonreversible regime adding more chains
does not lead to performance collapse.  However, even with these more efficient
non reversible PT algorithms, \citet{syed_non_reversible_2019} established that
the improvement brought by higher parallelism will asymptote to a fundamental
limit known as the \emph{global communication barrier}. 
	
In this work, we show that by generalizing the class of paths interpolating
between $\pi_0$ and $\pi_1$ from linear to nonlinear, the global communication
barrier can be broken, leading to substantial performance improvements.
Importantly, the nonlinear path used to demonstrate this breakage is computed
using a practical algorithm that can be used in any situation where PT is
applicable. An example of a path optimized using our algorithm is shown in
Figure~\ref{fig:gaussian_paths} (bottom). 

We also present a detailed theoretical analysis of parallel tempering algorithms
based on nonlinear paths.  Using this analysis we prove that the performance gains
obtained by going from linear to nonlinear path PT algorithms can be arbitrarily large. 
%Performance is measured by a standard PT metric known as the round trip rate.  
Our theoretical analysis also motivates a principled
objective function used to optimize over a parametric family of paths.

%\vspace{-.2cm}

\textbf{Literature review}\hspace{.1cm} Beyond parallel tempering, several methods to
approximate intractable integrals 
rely on a path of distributions from a reference to a target distribution, 
and there is a rich literature on the construction and optimization of 
nonlinear paths for annealed importance sampling type algorithms 
\citep{gelman1998simulating,rischard2018unbiased,grosse2013annealing, brekelmans2020annealed}. 
These algorithms are highly parallel; however, for challenging problems, even when 
combined with adaptive step size procedures \citep{Zhou2016} they typically 
suffer from particle degeneracy \citep[Sec. 7.4]{syed_non_reversible_2019}. 
Moreover, these methods use different path optimization criteria which are 
not well motivated in the context of parallel tempering. 

Some special cases of non-linear paths have been used in the PT literature
\cite{whitfield2002generalized, tawn2020weight}.
\citet{whitfield2002generalized} construct a non-linear path inspired by the
concept of Tsallis entropy, a generalization of Boltzmann-Gibbs entropy, but do
not provide algorithms to optimize over this path family.  The work of
\citet{tawn2020weight}, also considers a specific example of a nonlinear path
distinct from the ones explored in this paper.  However, the construction of
the nonlinear path in \citet{tawn2020weight} requires knowledge of the location
of the modes of $\pi_1$ and hence makes their algorithm less broadly applicable
than standard PT.

\section{Background}\label{sec:NRPT}

In this section, we provide a brief overview of 
parallel tempering (PT) \cite{geyer1991markov},
as well as recent results on nonreversible communication \cite{okabe2001replica,sakai_irreversible_2016,syed_non_reversible_2019}.
Define a reference unnormalized density function $\pi_0$ for which sampling is tractable,
and an unnormalized target density function $\pi_1$ for which sampling is intractable; 
the goal is to obtain samples from $\pi_1$.\footnote{We assume all distributions share a common state space $\statespace$ throughout,
and will often suppress the arguments of (log-)density functions---i.e., $\pi_1$ instead of $\pi_1(x)$---for notational brevity.} Define a path of distributions
$\pi_t \propto \pi_0^{1-t}\pi^t$ for $t\in[0, 1]$ from the reference to the target. 
Finally, define the annealing schedule $\schedule$ to be a monotone sequence in $[0, 1]$, satisfying
\begin{align*}
 \schedule&=(t_n)_{n=0}^N, \quad 0 = t_0 \leq t_1 \leq \dots \leq t_N = 1\\
 \|\schedule\| &= \max_{n\in \{0, \dots, N-1\}} t_{n+1}-t_n.
\end{align*}
The core idea of parallel tempering is to construct a Markov chain
$(X^0_m,\dots,X^N_m)$, $m=1, 2, \dots$ that (1) has invariant distribution
$\pi_{t_0}\cdot\pi_{t_1}\cdots\pi_{t_N}$---such that we can treat 
the marginal chain $X^N_n$ as samples from the target $\pi_1$---and (2) swaps components
of the state vector such that independent samples from component 0 (i.e., the reference $\pi_0$) traverse along the annealing
path and aid mixing in component $N$ (i.e., the target $\pi_1$).
This is possible to achieve by iteratively performing 
a \emph{local exploration} move followed by a \emph{communication} move as shown in Algorithm \ref{alg:NRPT}.

\paragraph{Local Exploration}
Given $(X^0_{m-1},\dots,X^N_{m-1})$, we obtain an intermediate state
$(\tilde{X}^0_m,\dots,\tilde{X}^N_m)$ by updating the $n^\text{th}$
component using any MCMC move targeting $\pi_{t_n}$, for $n=0, \dots, N$. 
This move can be performed in 
parallel across components since each is updated independently. 

\paragraph{Communication} Given the intermediate state $(\tilde{X}^0_m,\dots,\tilde{X}^N_m)$,
we apply pairwise swaps of components
$n$ and $n+1$, $n\in S_m\subset\{0,\dots,N-1\}$ for swapped index set $S_m$.
Formally, a swap is a move from
$(x^0,\dots,x^N)$ to $(x^0,\dots,x^{n+1},x^n,\dots,x^N)$,
which is accepted with probability 
\begin{align}
\alpha_n = 1\wedge \frac
{\pi_{t_n}(x^{n+1})\pi_{t_{n+1}}(x^n)}
{\pi_{t_n}(x^{n})\pi_{t_{n+1}}(x^{n+1})}.
\end{align}
Since each swap only depends on components $n$, $n+1$, one can perform
all of the swaps in $S_m$ in parallel, as long as $n \in S_m$ implies $(n+1) \notin S_m$.
The largest collection of such non-interfering swaps
is $S_m\in \{\Peven,\Podd\}$, where
$\Peven,\Podd$ are the even and odd subsets of $\{0,\dots,N-1\}$ respectively.
In \emph{non-reversible PT} (NRPT) \citep{okabe2001replica}, the swap set $S_m$ at each step $m$
is set to 
\[
S_m = \left\{\begin{array}{ll}
\Peven & \text{if }m \text{ is even}\\
\Podd &  \text{if }m \text{ is odd.}\\
\end{array}\right.
\]

\paragraph{Round trips}
The performance of PT is sensitive to both the local exploration and
communication moves. The quantity commonly used to evaluate the performance of
MCMC algorithms is the effective sample size (ESS); however, ESS measures the combined
performance of local exploration and communication, and is not 
able to distinguish between the two. Since the major difference between PT and standard MCMC
is the presence of a communication step, we require a way to measure communication performance
in isolation such that we can compare PT methods
without dependence on the details of the local exploration move. The \emph{round trip rate} is a
performance measure from the PT literature
\cite{katzgraber2006feedback,lingenheil2009efficiency}
that is designed to assess communication efficiency alone.
We say a \emph{round trip} has occurred when a new sample from the reference
$\pi_0$ travels to the target $\pi_1$ and then back to $\pi_0$; the {round trip rate} 
$\tau(\schedule)$ is the frequency at which round trips occur. Based
on simplifying assumptions on the local exploration moves,
the round trip rate $\tau(\schedule)$  may be expressed as \citep[Section~3.5]{syed_non_reversible_2019}
\begin{align}
\tau(\schedule) = \left(2+2\sum_{n=0}^{N-1} \frac{r(t_n,t_{n+1})}{1-r(t_n,t_{n+1})}\right)^{-1}, \label{eq:nrpt_rtr}
\end{align}
 where $r(t,t')$ is the expected probability of rejection between chains $t,t' \in [0,1]$,
\[
r(t,t')=\E\left[1\wedge \frac
{\pi_{t}(X')\pi_{t'}(X)}
{\pi_{t}(X)\pi_{t'}(X')}\right],
\]
and $X, X'$ have distributions $X\sim\pi_t$, $X'\sim\pi_{t'}$. 
Further, if $\schedule$ is refined so that $\|\schedule\|\to 0$ as $N\to \infty$,
we find that the asymptotic (in $N$) round trip rate is
\begin{align}
 \tau_\infty = \lim_{N\to\infty} \tau(\schedule) = \left(2+2\Lambda\right)^{-1}, \label{eq:nrpt_artr}
\end{align}
where $\Lambda \geq 0$ is a constant associated with the pair $\pi_0, \pi_1$ called the \emph{global communication barrier} \cite{syed_non_reversible_2019}. Note that $\Lambda$ does not depend on the number of chains $N$ or discretization schedule $\schedule$.

\begin{algorithm}[tb]
   \caption{NRPT}
   \label{alg:NRPT}
\begin{algorithmic}
   \REQUIRE state $\states_0$, path $\pi_t$, schedule $\schedule$, \# iterations $\nscan$
   \STATE $r_{n} \gets 0$ for all $n\in\{0,\dots, N-1\}$
   \FOR{$m=1$ {\bfseries to} $\nscan$}
   \STATE $\tilde\states_m\gets \texttt{LocalExploration}(\states_{m-1})$
   \STATE $S_m\gets \Peven$ if $m$ is even, otherwise $S_m\gets \Podd$
   \FOR{$n=0$ {\bfseries to} $N-1$}
   \STATE $\alpha_n \gets 1 \wedge \frac
{\pi_{t_n}(x^{n+1})\pi_{t_{n+1}}(x^n)}
{\pi_{t_n}(x^{n})\pi_{t_{n+1}}(x^{n+1})}$
   \STATE $r_n \gets r_n + (1-\alpha_n)$
   \STATE $U_n\sim \mathrm{Unif}(0,1)$
   \IF{$n\in S_m$ \AND $U_n \leq \alpha_n$}
   \STATE $(\tilde x_m^n,\tilde x_m^{n+1})\gets (\tilde x_m^{n+1},\tilde x_m^n)$
   \ENDIF
   \STATE $\states_m\gets \tilde\states_m$
   \ENDFOR
   \ENDFOR
   \STATE $r_n\gets r_n/\nscan$ for $n\in\{0, \dots, N-1\}$
   \STATE {\bfseries Return:}  $\{\states_m\}_{m=1}^{\nscan}, \{r_n\}_{n=0}^{N-1}$
\end{algorithmic}
\end{algorithm}

\section{General annealing paths}\label{sec:general_annealing}

In the following, we use the terminology \emph{annealing path} to describe a
continuum of distributions interpolating between $\pi_0$ and $\pi_1$; this
definition will be formalized shortly. The previous work reviewed in the last
section assumes that the annealing path has the form $\pi_t\propto
\pi_0^{1-t}\pi_1^t$, i.e., that the annealing path linearly interpolates
between the log densities. A natural question is whether using other
paths could lead to an improved round trip rate. 

In this work we show that the answer to this question is positive. The following 
proposition demonstrates that the traditional path
 $\pi_t \propto \pi_0^{1-t}\pi_1^t$ suffers from an arbitrarily suboptimal global communication barrier
even in simple examples with Gaussian reference and target distributions.
\begin{proposition}\label{prop:badlinear}
Suppose the reference and target distributions are $\pi_0 = \mathcal{N}(\mu_0, \sigma^2)$ and $\pi_1 = \mathcal{N}(\mu_1, \sigma^2)$, 
and define $z=|\mu_1-\mu_0|/\sigma$. Then as $z\to\infty$,
\begin{enumerate}
\item the path $\pi_t\propto \pi_0^{1-t}\pi_1^t$ has $\tau_\infty = \Theta(1/z)$, and
\item there exists a path of Gaussians distributions with $\tau_\infty = \Omega( 1/\log z )$.
\end{enumerate}
%then the global communication barrier of the linear annealing path $t \mapsto (1-t)W_0 + tW_1$ 
%satisfies $\Lambda = z/\sqrt{\pi}$. There exists a path in the family of Gaussians such that, $\Lambda_\phi < \log\left(1+z^2/4+z/4\sqrt{8+z^2}\right)$
\end{proposition}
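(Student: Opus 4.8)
The plan is to translate both claims into statements about the global communication barrier $\Lambda$ of the relevant path, using $\tau_\infty = (2+2\Lambda)^{-1}$ from \eqref{eq:nrpt_artr}: claim~(1) becomes $\Lambda = \Theta(z)$ for the linear path, and claim~(2) becomes the construction of a path through Gaussians with $\Lambda = O(\log z)$. Throughout I would use the representation $\Lambda = \int_0^1 \lambda(t)\,\ud t$ of the barrier as the integral of the \emph{local communication barrier} $\lambda(t) = \tfrac12\,\E\!\left[\,|\ell_t(X) - \ell_t(X')|\,\right]$, where $\ell_t(x) = \partial_t\log\pi_t(x)$ and $X, X'$ are i.i.d.\ from $\pi_t$; this comes from the first-order expansion $r(t,t+h) = h\,\lambda(t) + o(h)$ of the rejection rate and is the natural extension of the linear-path formula in \citet{syed_non_reversible_2019} to nonlinear paths. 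After an affine rescaling I may assume $\sigma=1$, $\mu_0 = 0$, $\mu_1 = z$, and $z$ large.

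For claim~(1): the linear path is $\pi_t = \mathcal{N}(tz,1)$, for which $\ell_t(x) = zx - tz^2$, hence $\ell_t(X) - \ell_t(X') = z(X-X')$ with $X - X' \sim \mathcal{N}(0,2)$ under $\pi_t$. Therefore $\lambda(t) = \tfrac{z}{2}\,\E|\mathcal{N}(0,2)| = z/\sqrt\pi$ for all $t$, so $\Lambda = z/\sqrt\pi = \Theta(z)$ and $\tau_\infty = \Theta(1/z)$.

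For claim~(2): I would use an \emph{inflate--slide--deflate} path of Gaussians $\pi_t = \mathcal{N}(m(t), s(t)^2)$: on $[0,\tfrac13]$ keep $m \equiv 0$ and grow $s$ from $1$ to $z$ with $\log s$ monotone (e.g.\ $s(t) = z^{3t}$); on $[\tfrac13,\tfrac23]$ keep $s \equiv z$ and increase $m$ monotonically from $0$ to $z$; on $[\tfrac23,1]$ keep $m \equiv z$ and shrink $s$ from $z$ back to $1$ with $\log s$ monotone. The endpoints agree with $\pi_0$ and $\pi_1$. A direct calculation with $U = (X - m(t))/s(t) \sim \mathcal{N}(0,1)$ (and $U'$ likewise) gives $\ell_t(X) - \ell_t(X') = \tfrac{\dot m}{s}(U - U') + \tfrac{\dot s}{s}(U^2 - U'^2)$, so the triangle inequality yields $\lambda(t) \le c_1\tfrac{|\dot m(t)|}{s(t)} + c_2\tfrac{|\dot s(t)|}{s(t)}$ for absolute constants $c_1 = \tfrac12\E|U - U'|$, $c_2 = \tfrac12\E|U^2 - U'^2|$. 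Integrating stage by stage: each of the two variance stages contributes $c_2\int|\tfrac{\ud}{\ud t}\log s|\,\ud t = c_2\log z$ (with $\dot m = 0$ there), and the mean stage contributes $c_1\int\tfrac{|\dot m|}{s}\,\ud t = c_1\cdot\tfrac1z\cdot z = c_1$ (with $\dot s = 0$ there). Hence $\Lambda \le 2c_2\log z + c_1 = O(\log z)$, and $\tau_\infty = (2 + 2\Lambda)^{-1} = \Omega(1/\log z)$. The path is piecewise-smooth, and a smooth reparametrization near the two junction times changes none of the integrals.

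The hard part is not the arithmetic---every expectation above is a one-line Gaussian computation---but spotting the construction: transiently inflating the scale to order $z$ lets the mean be transported at only $O(1)$ cost, while moving through scales costs only $O(\log z)$, and the triangle-inequality split of $\lambda(t)$ into a mean-drift term and a scale-drift term is precisely what decouples these two costs so the bound stays clean. The only other ingredient, if it is not already available at this point in the paper, is the identity $\Lambda = \int_0^1\lambda(t)\,\ud t$ for general paths, which follows from the $h\to 0$ behaviour of $r(t,t+h)$ together with $\|\schedule\|\to 0$ in \eqref{eq:nrpt_rtr}.
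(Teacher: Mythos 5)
Your proof is correct. Part~1 is essentially identical to the paper's: both compute $\lambda(t)=\tfrac{z}{2}\E|Z-Z'|=z/\sqrt{\pi}$ for the linear path and conclude $\Lambda=z/\sqrt{\pi}=\Theta(z)$. For Part~2, however, you take a genuinely different route. The paper first applies Jensen's inequality to get $\Lambda\leq \tfrac{1}{\sqrt{2}}\Lambda_F$, where $\Lambda_F$ is the length of the path in the Fisher information metric, and then invokes the closed-form Fisher--Rao geodesic distance between Gaussians from \citet{costa2015fisher}, $\Lambda_F=\sqrt{2}\log\bigl(1+\tfrac{z^2}{4}+\tfrac{z}{4}\sqrt{8+z^2}\bigr)=\Theta(\log z)$, so the witness path is the Fisher--Rao geodesic. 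You instead exhibit an explicit inflate--slide--deflate path and bound $\lambda(t)\leq c_1|\dot m|/s+c_2|\dot s|/s$ directly by the triangle inequality, integrating stage by stage to get $\Lambda\leq 2c_2\log z+c_1$; I checked the computation $\ell_t(X)-\ell_t(X')=\tfrac{\dot m}{s}(U-U')+\tfrac{\dot s}{s}(U^2-U'^2)$ and the stage integrals, and they are right. Your argument is more elementary and self-contained (no external geodesic formula needed, and the decoupling of mean-transport cost $O(1)$ from scale-transport cost $O(\log z)$ is transparent), while the paper's argument buys the sharper information-geometric interpretation that the Fisher--Rao geodesic itself is the natural near-optimal path, with explicit constants. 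The only thing you elide is verifying the regularity hypotheses of Theorem~\ref{thm:general_GCB} (the $V_1,V_2$ envelopes and the moment condition in Equation~(\ref{eq:mgf})) for your piecewise path so that $\Lambda=\int_0^1\lambda(t)\,\dee t$ and $\tau_\infty=(2+2\Lambda)^{-1}$ actually apply; these do hold for your construction since $|\dot m|/s$ and $|\dot s|/s$ are bounded and the standardized variables are Gaussian, but the paper's own proof is equally terse on this point, so this is a cosmetic rather than substantive gap.
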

\begin{figure}[t!]
\vskip 0.2in
\begin{center}
\centerline{\includegraphics[width=\columnwidth]{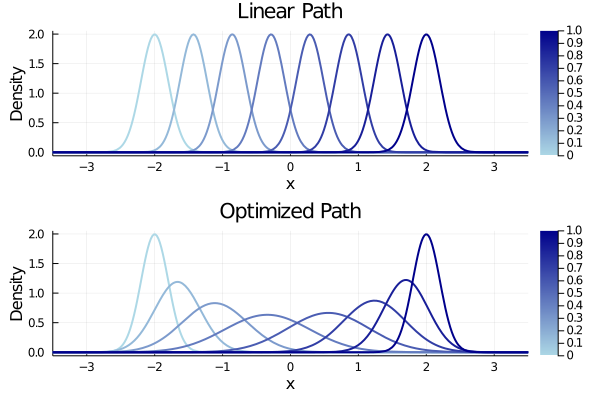}}
\caption{Two annealing paths between a $\pi_0=N(-2,0.2^2)$ (light blue) and $\pi_1=N(2,0.2^2)$ (dark blue) :
the traditional linear path (top) and an optimized nonlinear path (bottom). While the distributions in the linear path are nearly mutually singular, those in the optimized path overlap substantially, leading to faster round trips.}
\label{fig:gaussian_paths}
\end{center}
\vskip -0.2in
\end{figure}

Therefore, upon decreasing the variance of the reference and target while holding their means fixed,
the traditional linear annealing path obtains an exponentially smaller asymptotic round trip rate
than the optimal path of Gaussian distributions.  Figure \ref{fig:gaussian_paths}
provides an intuitive explanation. The standard path (top) corresponds to
a set of Gaussian distributions with mean interpolated between the reference and target.
If one reduces the variance of the reference and target, so does the 
variance of the distributions along the path.
For any fixed $N$, these distributions become nearly mutually singular, leading to
arbitrarily low round trip rates.
The solution to this issue (bottom) is to allow the distributions along the
path to have increased variances, thereby maintaining mutual overlap and the
ability to swap components with a reasonable probability. 
This motivates the need to design more general annealing paths.
 In the following, we introduce the precise
general definition of an annealing path, an analysis of path communication 
efficiency in parallel tempering, and a rigorous formulation of---and solution to---the
problem of tuning path parameters to maximize communication efficiency. 

\subsection{Assumptions}\label{sec:assumptions}

Let $\mathcal{P}(\statespace)$ be the set of probability densities with full 
support on a state space $\statespace$.
For any collection of densities $\pi_t \in \mathcal{P}(\statespace)$ with index $t \in [0, 1]$, associate
to each a log-density function $W_t$ such that
\begin{align}\label{eq:non-linear-path}
\pi_{ t}(x) = \frac{1}{Z_{ t}} \exp\left(W_{t}(x)\right),  \quad x\in\statespace,
\end{align}
where $Z_{t}=\int_\statespace \exp\left(W_{t}(x)\right)\ud x$ is the normalizing constant. 
Definition \ref{defn:path} provides the conditions necessary to form a
path of distributions from a reference $\pi_0$ to a target $\pi_1$ that are
well-behaved for use in parallel tempering.

\begin{definition}\label{defn:path}
An \emph{annealing path} is a map $\pi_{(\cdot)}:[0, 1] \to \mathcal{P}(\statespace)$, 
denoted $t \mapsto \pi_t$, such that for all $x\in\statespace$,  $\pi_t(x)$ is continuous in $t$.
\end{definition}

There are many ways to move beyond the standard linear path $\pi_t \propto \pi_0^{1-t}\pi_1^t$. For example, consider a nonlinear path $\pi_t \propto
\pi_0^{\eta_0(t)}\pi_1^{\eta_1(t)}$ where $\eta_i : [0, 1] \to \R$ are continuous
functions such that $\eta_0(0) = \eta_1(1) = 1$ and $\eta_0(1) = \eta_1(0) =
0$. As long as for all $t\in[0,1]$, $\pi_t$ is a normalizable density this is a valid annealing path between $\pi_0$ and $\pi_1$.
Further, note that the path parameter does not necessarily have to appear as an exponent: consider for example the mixture path  $\pi_{t} \propto (1-t)\pi_0 + t \pi$.
Section~\ref{sec:expfamspline} provides a more detailed example based on linear splines. 

\subsection{Communication efficiency analysis}\label{sec:communication-analysis}

Given a particular annealing path satisfying Definition \ref{defn:path}, we
require a method to characterize the round trip rate performance of parallel
tempering based on that path. The results presented in this section form the basis of the objective
function used to optimize over paths, as well as the foundation for the proof
of  Proposition~\ref{prop:badlinear}. 

We start with some notation for the rejection rates involved when Algorithm~1
is used with nonlinear paths (Equation~\ref{eq:non-linear-path}). 
For $t, t' \in [0,1]$, $x\in\statespace$, define the rejection rate function
$r:[0,1]^2\to[0,1]$ to be
\begin{align*}
&r(t, t') = 1 - \E\left[ \exp\left(\min\{0,A_{t, t'}(X, X')\}\right)\right] \\
&A_{t, t'}(x,x') = (W_{t'}(x)-W_t(x)) - (W_{t'}(x') - W_t(x')),
\end{align*}
where $X\sim \pi_{t}$ and $X'\sim\pi_{t'}$ are independent. 
Assuming that all chains have reached stationarity,
and all chains undergo \emph{efficient local exploration}---i.e., 
$W_{t}(X), W_{t}(\tilde{X})$ are independent when $X\sim \pi_t$ and $\tilde{X}$ is 
generated by local exploration from $X$---then
the round trip rate for a particular schedule $\schedule$
has the form given earlier in Equation (\ref{eq:nrpt_rtr}). 
This statement follows from \citet[Cor.~1]{syed_non_reversible_2019} without modification,
because the proof of that result does not depend on the form of the acceptance ratio.

Our next objective is to characterize the asymptotic communication efficiency of a nonlinear path in the regime 
where $N\to\infty$ and $\|\schedule\|\to 0$---which 
establishes its fundamental ability to take advantage of parallel computation. 
In other words, we require a generalization of the asymptotic result in Equation~(\ref{eq:nrpt_artr}). 
In this case, previous theory relies on the particular form of the acceptance ratio for linear paths \cite{syed_non_reversible_2019}; in the remainder of this section, we provide a generalization of the asymptotic result for nonlinear paths
by piecewise approximation by a linear spline.

For $t, t'\in[0,1]$, 
define $\Lambda(t,t')$ to be the global communication barrier for 
the linear secant connecting $\pi_t$ and $\pi_t'$,
\begin{align}\label{eq:def_GCB_linear}
\Lambda(t,t')&=\frac{1}{2} \int_0^1 \E[|A_{t,t'}(X_s,X'_s)|]\dee s,\\
X_s, X'_s &\overset{\text{i.i.d.}}{\sim} \frac{1}{Z_s(t,t')}\exp\left((1-s)W_t+sW_{t'}\right).\notag
\end{align}
Lemma \ref{lemma:nonlinear_gcb_err} shows that the global communication barrier $\Lambda(t,t')$ along the secant
of the path connecting $t$ to $t'$ is a good approximation of the true rejection rate $r(t,t')$ with $O(|t-t'|^3)$
error as $t\to t'$.
\begin{lemma}\label{lemma:nonlinear_gcb_err} 
Suppose that for all $x\in\statespace$, $W_t(x)$ is piecewise continuously differentiable in $t$,
 and that there exists $V_1:\statespace\to [0,\infty)$ such that 
\begin{equation}\label{eq:V1_bound}
\forall x \in \statespace, \,\, \sup_{t\in[0,1]}\left|\frac{\dee W_t}{\dee t}(x)\right|\leq V_1(x),
\end{equation}
and 
\begin{equation}\label{eq:V1_integrability}
\sup_{t\in[0,1]}\E_{\pi_t}[V_1^3]<\infty.
\end{equation}
Then there exists a constant $C < \infty$ independent of $t,t'$ such that for all $t, t'\in[0,1]$,
\begin{equation}
|r(t,t') - \Lambda(t,t')| \leq  C|t-t'|^3.
\end{equation}
\end{lemma}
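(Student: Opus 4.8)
\medskip\noindent\emph{Proof sketch.}\quad Write $\phi := W_{t'} - W_t$, so that $A_{t,t'}(x,x') = \phi(x)-\phi(x')$, and let $\rho_s$ denote the secant density $\rho_s \propto \exp((1-s)W_t + sW_{t'})$ with normalizer $Z_s$ (thus $\rho_0 = \pi_t$, $\rho_1 = \pi_{t'}$, and the variables $X_s,X_s'$ in the definition of $\Lambda(t,t')$, Eq.~(\ref{eq:def_GCB_linear}), are i.i.d.\ $\rho_s$). The plan rests on an exact change of measure to the \emph{geometric midpoint} $\rho_{1/2}$. A short computation gives $\pi_t(x)\pi_{t'}(x') = (Z_tZ_{t'})^{-1}Z_{1/2}^2\,\rho_{1/2}(x)\rho_{1/2}(x')\,e^{-A_{t,t'}(x,x')/2}$; combining this with the identity $\min\{0,A_{t,t'}\}-A_{t,t'}/2 = -|A_{t,t'}|/2$ and with the symmetry of $A_{t,t'}$ under $\rho_{1/2}\otimes\rho_{1/2}$ (it is antisymmetric in its two arguments) yields the exact formula
\[
 r(t,t') = \frac{\E_{\rho_{1/2}\otimes\rho_{1/2}}\big[\sinh(|A_{t,t'}|/2)\big]}{\E_{\rho_{1/2}\otimes\rho_{1/2}}\big[\cosh(A_{t,t'}/2)\big]},
\]
where the denominator equals $Z_tZ_{t'}/Z_{1/2}^2 \ge 1$, so both expectations are finite. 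I would then compare both $r(t,t')$ and $\Lambda(t,t')$ to the common anchor $m(t,t') := \tfrac12\E_{\rho_{1/2}\otimes\rho_{1/2}}[|A_{t,t'}|]$, showing each differs from $m(t,t')$ by $O(|t-t'|^3)$.

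For $r(t,t') - m(t,t')$: using $0 \le \sinh v - v \le \tfrac{v^3}{6}\cosh v$ and $0\le \cosh v - 1 \le \tfrac{v^2}{2}\cosh v$ for $v\ge0$, the numerator is $m(t,t') + \delta_1$ and the denominator is $1+\delta_2$ with $0\le\delta_1 \le \tfrac{1}{48}\E_{\rho_{1/2}\otimes\rho_{1/2}}[|A_{t,t'}|^3\cosh(|A_{t,t'}|/2)]$ and $\delta_2 = Z_tZ_{t'}/Z_{1/2}^2 - 1 \ge 0$, whence $|r - m| \le \delta_1 + m\,\delta_2$. Since $s\mapsto\log Z_s$ is convex with $\tfrac{\dee^2}{\dee s^2}\log Z_s = \Var_{\rho_s}(\phi) \le |t-t'|^2\,\E_{\rho_s}[V_1^2]$, one gets $\log(1+\delta_2) = \log Z_0 + \log Z_1 - 2\log Z_{1/2} \le \tfrac14\sup_s\Var_{\rho_s}(\phi) = O(|t-t'|^2)$, and since $m = O(|t-t'|)$ this makes $m\,\delta_2 = O(|t-t'|^3)$. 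The term $\delta_1$ is where care is needed: a direct estimate would involve $\E[e^{|A_{t,t'}|}\,\mathrm{poly}(V_1)]$, which is not controlled by a third moment. Instead I would split on $\{A_{t,t'}\ge0\}$ and $\{A_{t,t'}<0\}$ and change measure back from $\rho_{1/2}\otimes\rho_{1/2}$ to $\pi_t\otimes\pi_{t'}$ (which reinserts a factor $e^{A_{t,t'}/2}$ that combines with $e^{|A_{t,t'}|/2}$ to give $e^{\pm A_{t,t'}} = e^{\pm(\phi(X)-\phi(X'))}$), then use that $e^{\phi}\dee\pi_t \propto \dee\pi_{t'}$ exactly, so $\E_{\pi_t}[V_1^k e^{\phi}]\,\E_{\pi_{t'}}[V_1^{3-k}e^{-\phi}] = \E_{\pi_{t'}}[V_1^k]\,\E_{\pi_t}[V_1^{3-k}] \le \sup_u\E_{\pi_u}[V_1^3]$ by Hölder; with $|A_{t,t'}| \le |t-t'|(V_1(X)+V_1(X'))$ this gives $\delta_1 = O(|t-t'|^3)$.

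For $\Lambda(t,t') - m(t,t')$: this equals $\tfrac12$ times the error of the midpoint quadrature rule applied to $\psi(s) := \E_{\rho_s\otimes\rho_s}[|A_{t,t'}|]$, so $|\Lambda - m| \le \tfrac{1}{48}\sup_{s\in[0,1]}|\psi''(s)|$. Differentiating under the integral (justified by the moment hypotheses) through $\partial_s\log(\rho_s(x)\rho_s(x')) = \phi(x)+\phi(x') - 2\E_{\rho_s}[\phi]$ gives
\[
 \psi''(s) = \E_{\rho_s\otimes\rho_s}\big[|A_{t,t'}|\,(\phi(X)+\phi(X')-2\E_{\rho_s}[\phi])^2\big] - 2\Var_{\rho_s}(\phi)\,\E_{\rho_s\otimes\rho_s}[|A_{t,t'}|],
\]
and since $|A_{t,t'}|$ and each $\phi$-deviation are $O(|t-t'|)$ pointwise, $|\psi''(s)| = O(|t-t'|^3)$ once $\sup_{t,t',s}\E_{\rho_s}[V_1^3] < \infty$. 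This last uniform bound again comes from convexity of $\log Z_s$: the tangent-line inequality at $s=0$ gives $Z_s \ge Z_t\exp(s\,\E_{\pi_t}[\phi]) \ge e^{-C_0|t-t'|}Z_t$ with $C_0 = \sup_u(\E_{\pi_u}[V_1^3])^{1/3}$, so $Z_t^{1-s}Z_{t'}^s/Z_s$ is bounded, and then $\E_{\rho_s}[V_1^3] \le (Z_t^{1-s}Z_{t'}^s/Z_s)\sup_u\E_{\pi_u}[V_1^3]$ by Hölder. Assembling $|r-\Lambda| \le |r-m| + |m - \Lambda| \le C|t-t'|^3$ with $C$ depending only on $\sup_u\E_{\pi_u}[V_1^3]$ and not on $t,t'$ would complete the argument.

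The main obstacle is reconciling the non-smoothness of $\min\{0,\cdot\}$ and $|\cdot|$ with the weak, third-moment-only integrability assumption: the naive Taylor remainders produce quantities such as $\E[e^{|A_{t,t'}|}\,\mathrm{poly}(V_1)]$ that would need exponential moments of $V_1$. The device that removes this difficulty is to route every offending term through an \emph{exact} reweighting $e^{\pm\phi}\dee\pi_t \propto \dee\pi_{t'}$ — equivalently, never to extrapolate the secant beyond its endpoints $\pi_t,\pi_{t'}$ — so that all error terms remain inside the available third moments; convexity of $s\mapsto\log Z_s$ then transfers the endpoint moment control (and the smallness of $\delta_2$) uniformly onto the whole secant.
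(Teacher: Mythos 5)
Your argument is correct and follows essentially the same route as the paper's proof: the exact geometric-midpoint (Predescu-type) representation of $r(t,t')$, comparison of both $r(t,t')$ and $\Lambda(t,t')$ with the common anchor $\tfrac12\E_{\rho_{1/2}\otimes\rho_{1/2}}[|A_{t,t'}|]$, and the midpoint quadrature error bound driven by an $O(|t-t'|^3)$ estimate on the second derivative in $s$. The only difference is one of self-containedness: you derive explicitly two steps the paper delegates to citations (the cubic control of the Taylor remainder of the midpoint formula, and the $C^2$ bound on $s\mapsto\E[|A_{t,t'}(\tilde X_s,\tilde X_s')|]$), and your exact reweighting $e^{\pm\phi}\,\dee\pi_t\propto\dee\pi_{t'}$ combined with log-convexity of $Z_s$ supplies the uniform third-moment control along the secant that the paper's Lemma 3 asserts from \eqref{eq:V1_integrability} without detailed justification.
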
 
A direct consequence of Lemma \ref{lemma:nonlinear_gcb_err} is that for any fixed schedule $\schedule$,
\begin{equation}
\left|\sum_{n=0}^{N-1} r(t_{n},t_{n+1}) - \Lambda(\schedule)\right| \leq  C\|\schedule\|^2,
\end{equation}
where $\Lambda(\schedule)  = \sum_{n=0}^{N-1}\Lambda(t_{n},t_{n+1})$. 
Intuitively, in the $\|\schedule\|\approx 0$ regime where
rejection rates are low,
\[
\frac{r(t_n,t_{n+1})}{1-r(t_n,t_{n+1})} \approx r(t_n,t_{n+1}),
\]
and we have  that 
$\tau(\schedule) \approx (2+2\Lambda(\schedule))^{-1}$.
Therefore, $\Lambda(\schedule)$ characterizes the communication efficiency of the path
in a way that naturally extends the global communication barrier from the linear path case.
Theorem \ref{thm:general_GCB} provides the precise statement: the convergence is uniform
in $\schedule$ and depends only on $\|\schedule\|$, and $\Lambda(\schedule)$ 
itself converges to a constant $\Lambda$ in the asymptotic regime. 
We refer to $\Lambda$,
defined below in Equation (\ref{eq:nonlineargcb}) as the global communication barrier for the general annealing path. 
\begin{theorem}\label{thm:general_GCB}
Suppose that for all $x\in\statespace$, $W_t(x)$ is piecewise
twice continuously differentiable in $t$, that
there exists $V_1:\statespace\to [0,\infty)$ satisfying
\eqref{eq:V1_bound} and \eqref{eq:V1_integrability},
and
that there exists $V_2:\statespace\to [0,\infty)$, $\epsilon > 0$ satisfying
\begin{equation}\label{eq:second_deriv}
 \forall x\in\statespace, \,\, \sup_{t\in[0,1]}\left|\frac{\dee^2 W_t}{\dee t^2}(x)\right|\leq V_2(x),
\end{equation} 
 and 
\begin{equation}\label{eq:mgf}
\sup_{t\in[0,1]} \E_{\pi_t}\left[(1+V_1)\exp(\epsilon V_2)\right] < \infty.
\end{equation}
Then,
\begin{align}
\lim_{\delta\to 0} \sup_{\schedule : \|\schedule\| \leq \delta} \left|(2+2\Lambda(\schedule))^{-1} - \tau(\schedule)\right| &= 0\label{eq:nonlineartauconv}\\
\text{and}\quad \lim_{\delta\to 0} \sup_{\schedule : \|\schedule\| \leq \delta} \left|\Lambda(\schedule) - \Lambda\right| &= 0,\label{eq:nonlineargcb}
\end{align}
where $\Lambda = \int_0^1\lambda(t)\dee t$ for an instantaneous rejection rate function $\lambda : [0,1] \to [0,\infty)$ given by
\begin{align*}
\lambda(t)&=\lim_{\Delta t\to 0}\frac{r(t+\Delta t,t)}{|\Delta t|}\\
&=\frac{1}{2}\E\left[\left|\frac{dW_{t}}{dt}(X_t)-\frac{dW_{t}}{dt}(X_t')\right|\right], \quad X_t,X'_t\overset{\text{i.i.d.}}{\sim}\pi_t. 
\end{align*}
\end{theorem}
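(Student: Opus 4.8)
\emph{Reduction.} Since $e^{\min\{0,a\}}\ge 1-|a|$, the definition of $r$ gives $r(t,t')\le\E\big[\,|A_{t,t'}(X,X')|\,\big]\le|t-t'|\,(\E_{\pi_t}[V_1]+\E_{\pi_{t'}}[V_1])$ for $X\sim\pi_t$, $X'\sim\pi_{t'}$, which is $O(|t-t'|)$ uniformly by \eqref{eq:V1_integrability}; combined with Lemma~\ref{lemma:nonlinear_gcb_err} this yields a constant $K$ with $\Lambda(t,t')\le K|t-t'|$ and $r(t,t')\le K|t-t'|$ for all $t,t'$, whence $\Lambda(\schedule)\le K$ and $\sum_n r(t_n,t_{n+1})\le K+C$ for every schedule. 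Writing $\tfrac{r}{1-r}=r+\tfrac{r^2}{1-r}$ and using $r(t_n,t_{n+1})\le K\|\schedule\|$ shows $\big|\sum_n\tfrac{r(t_n,t_{n+1})}{1-r(t_n,t_{n+1})}-\sum_n r(t_n,t_{n+1})\big|\le\tfrac{K\|\schedule\|}{1-K\|\schedule\|}(K+C)$, while the displayed consequence of Lemma~\ref{lemma:nonlinear_gcb_err} gives $\big|\sum_n r(t_n,t_{n+1})-\Lambda(\schedule)\big|\le C\|\schedule\|^2$. Since $x\mapsto(2+2x)^{-1}$ is $\tfrac12$-Lipschitz on $[0,\infty)$, both \eqref{eq:nonlineartauconv} and the implication \eqref{eq:nonlineargcb}$\Rightarrow$\eqref{eq:nonlineartauconv} follow once \eqref{eq:nonlineargcb} is proved; and since all bounds above depend on $\schedule$ only through $\|\schedule\|$, it suffices to show $|\Lambda(\schedule)-\Lambda|\to 0$ as $\|\schedule\|\to 0$.

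\emph{Local estimate.} The core claim is that $\big|\Lambda(t,t')-\int_t^{t'}\lambda(u)\,\dee u\big|\le C'|t-t'|^2$ whenever $t\le t'$ lie in a common interval of smoothness of $u\mapsto W_u$ and $|t-t'|$ is below a threshold depending only on $\epsilon$ and the finitely many smooth pieces. Fix $s\in[0,1]$, set $v=(1-s)t+st'$, write $W'_u=\dee W_u/\dee u$, and let $\mu_s\propto\exp((1-s)W_t+sW_{t'})$ be the density from \eqref{eq:def_GCB_linear}. Taylor-expanding $W_t$ and $W_{t'}$ about $v$ using \eqref{eq:second_deriv}, and noting that the first-order terms cancel because $v$ is the convex combination of $t,t'$, one gets both $A_{t,t'}(x,x')=(t'-t)\big(W'_v(x)-W'_v(x')\big)+O\!\big(|t-t'|^2(V_2(x)+V_2(x'))\big)$ and $(1-s)W_t(x)+sW_{t'}(x)-W_v(x)=\rho_s(x)$ with $|\rho_s(x)|\le c\,|t-t'|^2V_2(x)$. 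Thus $\mu_s$ is $\pi_v$ reweighted by $e^{\rho_s}$; taking $|t-t'|$ small enough that $|\rho_s|\le\epsilon'V_2$ for a fixed $\epsilon'\in(0,\epsilon)$, assumption \eqref{eq:mgf} makes $\E_{\pi_v}[e^{\epsilon'V_2}]$, $\E_{\pi_v}[V_1e^{\epsilon'V_2}]$, $\E_{\pi_v}[V_2e^{\epsilon'V_2}]$, $\E_{\pi_v}[V_1V_2e^{\epsilon'V_2}]$ finite and bounded in $v$ (absorbing the extra $V_1,V_2$ into the exponent via $V_2e^{\epsilon'V_2}\le(\epsilon-\epsilon')^{-1}e^{\epsilon V_2}$, etc.), which gives $\sup_s\E_{\mu_s}[V_2]<\infty$ and, comparing $\mu_s$ with $\pi_v$ via $|e^{\rho_s}-1|\le|\rho_s|e^{|\rho_s|}$ and $|W'_v|\le V_1$, the bound $\big|\E[\,|W'_v(X)-W'_v(X')|\,]-2\lambda(v)\big|=O(|t-t'|^2)$ for $X,X'$ i.i.d.\ $\mu_s$. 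Feeding both estimates into $\Lambda(t,t')=\tfrac12\int_0^1\E[\,|A_{t,t'}(X_s,X_s')|\,]\,\dee s$ gives $\E[\,|A_{t,t'}(X_s,X_s')|\,]=2|t-t'|\,\lambda(v)+O(|t-t'|^2)$, and the substitution $v=(1-s)t+st'$ in the $s$-integral finishes the claim; letting $t'\to t$ also shows $\lambda(t)=\lim_{\Delta t\to0}\Lambda(t,t+\Delta t)/|\Delta t|=\lim_{\Delta t\to0}r(t,t+\Delta t)/|\Delta t|$, the last equality by Lemma~\ref{lemma:nonlinear_gcb_err}.

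\emph{Summation.} The map $\lambda$ is measurable and bounded, $\lambda(t)\le\E_{\pi_t}[V_1]$ (finite uniformly by \eqref{eq:V1_integrability}), hence integrable with $\int_0^1\lambda=\Lambda$. Given $\schedule$ with $\|\schedule\|$ below the threshold above, let $\schedule^\star$ adjoin to $\schedule$ the finitely many, say $K_0$, breakpoints of $u\mapsto W_u$; every subinterval of $\schedule^\star$ lies in one smooth piece and has length $\le\|\schedule\|$, so the local estimate gives $\big|\Lambda(\schedule^\star)-\int_0^1\lambda\big|\le\sum C'|t^\star_{k+1}-t^\star_k|^2\le C'\|\schedule\|$. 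At most $K_0$ subintervals $[t_n,t_{n+1}]$ of $\schedule$ are split (assuming $\|\schedule\|$ is smaller than the gaps between breakpoints), and for each the crude bound $\Lambda\le K|\cdot|$ makes both $\Lambda(t_n,t_{n+1})$ and $\Lambda(t_n,b)+\Lambda(b,t_{n+1})$ at most $K\|\schedule\|$, so $|\Lambda(\schedule)-\Lambda(\schedule^\star)|\le2KK_0\|\schedule\|$. Combining, $|\Lambda(\schedule)-\Lambda|\le(C'+2KK_0)\|\schedule\|\to0$, which is \eqref{eq:nonlineargcb}, and hence (by the first paragraph) also \eqref{eq:nonlineartauconv}.

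\emph{Main obstacle.} I expect the delicate part to be the local estimate, and specifically the claim that the $O(|t-t'|^2)$ Taylor remainders are not destroyed when passing from $\pi_v$ to the secant density $\mu_s$. This is exactly the role of \eqref{eq:mgf}, and of the factor $1+V_1$ inside it: one must keep the exponent $|\rho_s|$ strictly below $\epsilon V_2$ with enough slack that every moment of the form $\E_{\pi_v}[V_1^aV_2^be^{\epsilon'V_2}]$, $a,b\in\{0,1\}$, $\epsilon'<\epsilon$, arising from $|e^{\rho_s}-1|\le|\rho_s|e^{|\rho_s|}$ and from the normalizer of the reweighting, is dominated by $\sup_u\E_{\pi_u}[(1+V_1)e^{\epsilon V_2}]$, and one must obtain this domination uniformly in $s$, $t$, $t'$ and $v$; the piecewise-smoothness breakpoints are a secondary nuisance, handled above by schedule refinement together with the universal crude bound $\Lambda(t,t')\le K|t-t'|$.
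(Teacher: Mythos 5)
Your proof is correct and follows essentially the same route as the paper's: both approximate the path on each subinterval by its linear secant, observe that the secant density at interior parameter $s$ is the true path density $\pi_v$ (with $v=(1-s)t+st'$) reweighted by an exponent of size $O(V_2\,|t-t'|^2)$ controlled via the moment condition \eqref{eq:mgf}, and compare the secant slope to $\dee W_t/\dee t$ with error $O(V_2\,|t-t'|)$ --- your $\rho_s$-reweighting and the two error terms in your local estimate are exactly the paper's spline path $\tilde{W}_t$, its Lemma on $\E_{\tilde\pi_t}$ versus $\E_{\pi_t}$, and its $J_{1,t}$/$J_{2,t}$ decomposition, reorganized per subinterval. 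The only substantive differences are cosmetic bonuses on your side: you obtain an explicit $O(\|\schedule\|)$ rate where the paper settles for qualitative uniform convergence, and you derive \eqref{eq:nonlineartauconv} directly from the Lipschitz property of $x\mapsto(2+2x)^{-1}$ rather than citing an external theorem.
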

The integrability condition is required to control the tail behaviour of distributions
formed by linearized approximations to the path $\pi_t$.
This condition is satisfied by a wide range of annealing paths,
e.g., the linear spline paths proposed in this work---since 
in that case $V_2 = 0$.

\subsection{Annealing path families and optimization}\label{sec:path_tuning}
It is often the case that there are a set of candidate annealing paths in consideration 
for a particular target $\pi$.
For example, if a path has tunable parameters $\phi\in\Phi$ that govern its shape, we
can generate a collection of annealing paths that all target $\pi$ by varying the parameter $\phi$.
We call such collections an annealing path family.

\begin{definition}\label{defn:pathfamily}
An \emph{annealing path family} for target $\pi_1$
is a collection of annealing paths $\{\pi^\phi_t\}_{\phi\in\Phi}$ such that
for all parameters $\phi \in \Phi$, $\pi^\phi_1=\pi_1$.
\end{definition}
There are many ways to construct useful annealing path families.
For example, if one is provided a parametric family of variational distributions $\{q_{\phi} : \phi\in\Phi\}$
for some parameter space $\Phi$,
one can construct the annealing path family of linear paths
$\pi^\phi_t=q_\phi^{1-t}\pi_1^t$ from a variational reference $q_\phi$ to the
target $\pi_1$. More generally, given $\eta_i(t)$ satisfying the constraints in
Section \ref{sec:assumptions}, $\pi^\phi_t=q_\phi^{\eta_0(t)}\pi_1^{\eta_1(t)}$
defines a nonlinear annealing path family. 
Another example of an annealing path family used in the context of PT are $q$-paths $\{\pi_t^q\}_{q\in[0,1]}$
\citep{whitfield2002generalized}. Given a fixed reference and target $\pi_0,\pi_1$, the path
$\pi_t^q$ interpolates between the mixture path ($q=0$) and the linear path ($q=1$) 
\citep{brekelmans2020annealed}. In Section \ref{sec:expfamspline}, we provide a new
flexible class of nonlinear paths based on splines that is designed specifically
to enhance the performance of parallel tempering.

Since every path in an annealing path family has the desired target distribution $\pi_1$, 
we are free to optimize the path over the tuning parameter space $\phi\in\Phi$
in addition to optimizing the schedule $\schedule$.\footnote{We assume that the optimization over $\phi$ ends after a finite number of iterations to sidestep the potential pitfalls of adaptive MCMC methods \cite{andrieu_ergodicity_2006}.} 
Motivated by the analysis of Section~\ref{sec:communication-analysis}, a natural objective function
for this optimization to consider is the non-asymptotic round trip rate
\begin{align}
 \phi^\star, \schedule^\star &= \argmax_{\phi\in\Phi,\schedule} \,\tau^\phi(\schedule)\label{eq:tauopt}\\
&= \argmin_{\phi\in\Phi,\schedule} \sum_{n=0}^{N-1} \frac{r^\phi(t_{n}, t_{n+1})}{1-r^\phi(t_n,t_{n+1})},\label{eq:nonasympcb}
\end{align}
where now the round trip rate and rejection rates depend both on the schedule and path parameter, 
denoted by superscript $\phi$.
We solve this optimization using an approximate coordinate-descent procedure, iterating between an update of the schedule $\schedule$ for a fixed path parameter $\phi\in\Phi$, followed by a gradient step in $\phi$ based on a surrogate objective function and a fixed schedule. This is summarized in Algorithm \ref{alg:PT_tuning}. We outline the details of schedule and path tuning procedure in the following. 

\begin{algorithm}[t!]
   \caption{PathOptNRPT}
   \label{alg:PT_tuning}
\begin{algorithmic}
   \REQUIRE state $\states$, path family $\pi^\phi_t$, parameter $\phi$, \# chains $N$, \# PT iterations $\nscan$, \# tuning steps $S$, learning rate $\gamma$
   \STATE $\schedule \gets (0, 1/N, 2/N, \dots, 1)$
   \FOR{$s=1$ {\bfseries to} $S$}
   \STATE $\{\states_m\}_{m=1}^M, (r_n)_{n=0}^N\gets \texttt{NRPT}(\states,\pi^\phi_t,\schedule,M)$
   \STATE $\lambda^\phi \gets \texttt{CommunicationBarrier}(\schedule,\{r_n\})$
   \STATE $\schedule\gets \texttt{UpdateSchedule}(\lambda^\phi,N)$
   \STATE $\phi \gets \phi - \gamma \nabla_\phi \sum_{n=0}^{N-1}\SKL(\pi^\phi_{t_n}, \pi^\phi_{t_{n+1}})$
   \STATE $\states \gets \states_{M}$
   \ENDFOR
    \STATE {\bfseries Return:} $\phi,\schedule$
\end{algorithmic}
\end{algorithm}

\paragraph{Tuning the schedule} 
Fix the value of $\phi$, which fixes the path. We adapt a schedule tuning algorithm  
from past work to update the schedule $\schedule=(t_n)_{n=0}^N$ \citep[Section~5.1]{syed_non_reversible_2019}.
Based on the same argument as this previous work, we obtain that when $\|\schedule\|\approx 0$, the non-asymptotic round trip rate is maximized
when the rejection rates are all equal. The schedule that approximately
achieves this satisfies 
\begin{align}
\forall n \in \{1, \dots, N-1\}, \quad \frac{1}{\Lambda^{\phi}}\int_0^{t_n} \lambda^\phi(s)\dee s = \frac{n}{N}. \label{eq:equicrit}
\end{align}
Following \citet{syed_non_reversible_2019}, 
we use Monte Carlo estimates of the rejection rates $r^\phi(t_n, t_{n+1})$ to approximate $t \mapsto \int_0^t \lambda^\phi(s) \ud s$, $s\in[0,1]$ via a monotone cubic spline,  
and then use bisection search to solve for each $t_n$ according to Equation (\ref{eq:equicrit}).

\paragraph{Optimizing the path} Fix the schedule $\schedule$; we now want to
improve the path itself by modifying $\phi$.  However, in
challenging problems this is not as simple as taking a gradient step for the
objective in Equation \eqref{eq:tauopt}.  In particular, in early
iterations---when the path is near its oft-poor initialization---the rejection rates
satisfy $r^\phi(t_n,t_{n+1}) \approx 1$.  As demonstrated empirically in Appendix
\ref{sec:snr}, gradient estimates in this regime exhibit a low signal-to-noise
ratio that precludes their use for optimization.

We propose a surrogate, the symmetric KL divergence, motivated as follows. Consider first the global communication barrier $\Lambda^\phi(\schedule)$ for the linear spline approximation to the path;
Theorem \ref{thm:general_GCB} guarantees that as long as $\|\schedule\|$ is small enough,
one can optimize $\Lambda^\phi(\schedule)$ in place of the round trip rate $\tau^\phi(\schedule)$.
By Jensen's inequality,
\begin{equation*}
\frac{1}{N^2}\Lambda^\phi(\schedule)^2\leq\frac{1}{N}\sum_{n=0}^{N-1}\Lambda^\phi(t_{n},t_{n+1})^2.
\end{equation*}
Next, we apply Jensen's inequality again to the definition of $\Lambda^\phi(t_{n},t_{n+1})$ from \eqref{eq:def_GCB_linear}, 
which shows that
\begin{equation*}
\Lambda^\phi(t_{n},t_{n+1}))^2\leq\frac{1}{4}\int_0^1\E[A_{t_{n},t_{n+1}}(X_s,X'_s)^2]\dee s,
\end{equation*}
where $X_s$ are defined in \eqref{eq:def_GCB_linear} are drawn from the linear path between $\pi^\phi_{t_{n}}$ and $\pi^\phi_{t_{n+1}}$. Finally, we note that the inner expectation is the path integral of the Fisher information metric along the linear path and evaluates to the symmetric KL
divergence \citep[Result 4]{dabak2002relations},
\begin{align*}
\int_0^1\E\left[(A_{t_{n}, t_{n+1}}(X_{s}, X'_{s}))^2\right] \dee s = 2\SKL(\pi^\phi_{t_{n}},\pi^\phi_{t_{n+1}}).
\end{align*}
Therefore we have
\begin{align}\label{eq:bound}
\frac{2\Lambda^\phi(\schedule)^2}{N}\leq \sum_{n=0}^{N-1}\SKL(\pi^\phi_{t_{n}},\pi^\phi_{t_{n+1}}).
\end{align}

The slack in the inequality in Equation \eqref{eq:bound} could potentially depend on $\phi$ even in the large $N$ regime. Therefore, during optimization, we recommend monitoring the value of the original objective function (Equation~(\ref{eq:tauopt})) to ensure that the optimization of the surrogate SKL objective indeed improves it, and hence the round trip rate performance of PT via Equation (\ref{eq:nrpt_rtr}). In the experiments we display the values of both objective functions.

\section{Spline annealing path family}\label{sec:expfamspline}
In this section, we develop a family of annealing paths---the
\emph{spline annealing path family}---that offers a practical and flexible
improvement upon the traditional linear paths considered in past work.  We
first define a general family of annealing paths based on the exponential
family, and then provide the specific details of the spline family with
a discussion of its properties. Empirical results in Section~\ref{sec:experiments} 
demonstrate that the spline annealing path family resolves the problematic
Gaussian annealing example in Figure \ref{fig:gaussian_paths}.

\subsection{Exponential annealing path family}\label{sec:annealed_exponential_family}
We begin with the practical desiderata for an annealing path family given a
fixed reference $\pi_0$ and target $\pi_1$ distribution.\footnote{A natural
extension of this discussion would include parametrized variational reference
distribution families. For simplicity we restrict to a fixed reference.} First,
the traditional linear path $\pi_t\propto \pi_0^{1-t}\pi_1^t$ should be a
member of the family, so that one can achieve at least the round trip rate
provided by that path.  Second, the family should be broadly applicable and not
depend on particular details of either $\pi_0$ or $\pi_1$.  Finally, using the
Gaussian example from Figure \ref{fig:gaussian_paths} and Proposition
\ref{prop:badlinear} as insight, the family should enable the path to smoothly
vary from $\pi_0$ to $\pi_1$ while inflating / deflating the variance as
necessary.

These desiderata motivate the design of the \emph{exponential annealing path family},
in which each annealing path takes the form 
\[
\pi_t\propto \pi_0^{\eta_0(t)}\pi_1^{\eta_1(t)}=\exp(\eta(t)^TW(x)),
\]
for some function $\eta(t)=(\eta_0(t),\eta_1(t))$ and reference/target log
densities $W(x)=(W_0(x),W_1(x))$. Intuitively, $\eta_0(t)$ and $\eta_1(t)$
represent the level of annealing for the reference and target respectively
along the path. 
Proposition \ref{prop:expfampath} shows that a broad collection of functions
$\eta$ indeed construct a valid annealing path family including the linear
path. 
%Without loss of generality, we assume any piecewise twice continuously
%differentiable function $\eta : [0, 1] \to \R^2$ is reparameterized 
%to be constant-speed, i.e., $\|\eta'(t)\|_2 = L$ for a path of length $L$.
\begin{proposition}\label{prop:expfampath}
Let $\Omega\subseteq\R^2$ be the set
\begin{align*}
\Omega \!=\! \left\{\xi\in\R^2 \! :\!\! \int\exp(\xi^TW(x))\dee x < \infty\right\}.
\end{align*}
Suppose $(0,1)\in\Omega$ and 
$\mathcal{A}$ is a set of piecewise twice continuously differentiable 
functions $\eta : [0, 1] \to \Omega$ such that $\eta(1) = (0, 1)$.
Then 
\[
\left\{\pi_t(x) \propto \exp(\eta(t)^TW(x)) : \eta \in \mathcal{A}\right\}
\]
is an annealing path family for target distribution $\pi_1$.
If additionally $(1,0)\in\Omega$, then the linear path $\eta(t) = (1-t, t)$ may be included
in $\mathcal{A}$.
Finally, if for every $\eta \in \mathcal{A}$ there exists $M > 0$
such that $\sup_t \max\{\|\eta'(t)\|_2, \|\eta''(t)\|_2\} \leq M$
and Equation (\ref{eq:mgf}) holds with $V_1 = V_2 = M\|W\|_2$, 
then each path in the family satisfies the conditions of Theorem~\ref{thm:general_GCB}. 
\end{proposition}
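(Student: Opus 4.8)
The plan is to verify the three successive assertions in turn: that the collection is an annealing path family (Definitions~\ref{defn:path} and~\ref{defn:pathfamily}), that the linear path is admissible, and that each path in the family satisfies the hypotheses of Theorem~\ref{thm:general_GCB}. The one structural fact worth isolating first is that $\Omega$ is convex, which follows in one line from H\"older's inequality: for $\xi,\xi'\in\Omega$ and $\lambda\in[0,1]$, $\int\exp((\lambda\xi+(1-\lambda)\xi')^TW)\le(\int\exp(\xi^TW))^\lambda(\int\exp(\xi'^TW))^{1-\lambda}<\infty$.

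For the first assertion, fix $\eta\in\mathcal{A}$. For each $t\in[0,1]$ the assumption $\eta(t)\in\Omega$ makes $Z_t=\int_\statespace\exp(\eta(t)^TW(x))\,\dee x$ finite, so $\pi_t\propto\exp(\eta(t)^TW)$ is a probability density, and it has full support because the exponential is everywhere positive (with $W_0,W_1$ real-valued, as $\pi_0,\pi_1$ have full support); hence $\pi_t\in\mathcal{P}(\statespace)$. To check the continuity clause of Definition~\ref{defn:path} I would fix $x$ and note that $\exp(\eta(t)^TW(x))$ is continuous in $t$ since $\eta$ is, so the task reduces to continuity of $t\mapsto Z_t$: for $t$ in a small interval about a given $t_0$, $\eta(t)$ ranges over a compact subset of the convex set $\Omega$, and a dominated-convergence argument (supplemented by Fatou's lemma for one inequality when $\eta(t_0)\in\partial\Omega$) gives $Z_t\to Z_{t_0}$, hence $\pi_t(x)$ continuous in $t$. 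Finally $\eta(1)=(0,1)$ forces $\pi_1\propto\exp(W_1)$, which normalizes to the target $\pi_1$, so Definition~\ref{defn:pathfamily} holds.

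For the linear path, if additionally $(1,0)\in\Omega$, then convexity of $\Omega$ puts the affine curve $\eta(t)=(1-t,t)=(1-t)(1,0)+t(0,1)$ in $\Omega$ for all $t$; it is trivially piecewise twice continuously differentiable and has $\eta(1)=(0,1)$, so it meets the requirements on members of $\mathcal{A}$, and the path it induces is $\pi_t\propto\exp((1-t)W_0+tW_1)=\pi_0^{1-t}\pi_1^t$. For the last assertion, since $W_t(x)=\eta(t)^TW(x)$ we have $\frac{\dee W_t}{\dee t}(x)=\eta'(t)^TW(x)$ and $\frac{\dee^2 W_t}{\dee t^2}(x)=\eta''(t)^TW(x)$ away from the finitely many breakpoints, so $t\mapsto W_t(x)$ is piecewise twice continuously differentiable; Cauchy--Schwarz with $\sup_t\max\{\|\eta'(t)\|_2,\|\eta''(t)\|_2\}\le M$ bounds both derivatives in modulus by $M\|W(x)\|_2=V_1(x)=V_2(x)$, which gives \eqref{eq:V1_bound} and \eqref{eq:second_deriv}, while \eqref{eq:mgf} holds by hypothesis for these $V_1,V_2$. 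The remaining condition \eqref{eq:V1_integrability} I would get from the elementary inequality $u^3\le C_\epsilon(1+u)\exp(\epsilon u)$ (valid for all $u\ge0$ and the $\epsilon$ furnished by \eqref{eq:mgf}) applied with $u=V_1$, which yields $V_1^3\le C_\epsilon(1+V_1)\exp(\epsilon V_2)$ and hence $\sup_t\E_{\pi_t}[V_1^3]<\infty$.

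I expect the only genuine obstacle to be the continuity of $t\mapsto Z_t$ in the first assertion when $\eta$ reaches $\partial\Omega$, since a log-normalizer need not be continuous at the boundary of its domain without an argument; the convexity-plus-Fatou sketch above is meant to cover this, and if one is willing to assume $\eta$ takes values in the interior of $\Omega$ the issue disappears entirely (log-partition functions are smooth there). The reduction of the cubic-moment bound to the exponential-moment assumption is the only other step that is not purely mechanical.
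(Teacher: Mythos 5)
Your proof is correct and follows essentially the same route as the paper's: convexity of $\Omega$ via H\"older's inequality, Cauchy--Schwarz to obtain $V_1=V_2=M\|W\|_2$ for \eqref{eq:V1_bound} and \eqref{eq:second_deriv}, and the observation that \eqref{eq:mgf} implies \eqref{eq:V1_integrability}. You are in fact somewhat more careful than the paper, which asserts the continuity of $t\mapsto\pi_t(x)$ and the implication \eqref{eq:mgf}$\Rightarrow$\eqref{eq:V1_integrability} without addressing the continuity of $Z_t$ or recording the elementary inequality $u^3\le C_\epsilon(1+u)e^{\epsilon u}$ that you use.
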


\subsection{Spline annealing path family}
Proposition \ref{prop:expfampath} reduces the problem of designing a general
family of paths of probability distributions to the much simpler task of
designing paths in $\R^2$. We argue that a good choice can be constructed using
linear spline paths connecting $K$ knots 
$\phi = (\phi_0, \dots, \phi_{K}) \in (\R^2)^{K+1}$, i.e., for all 
$k\in\{1, \dots, K\}$ and $t\in [\frac{k-1}{K},\frac{k}{K}]$,
\[
\eta^\phi(t) \mapsto (k-Kt)\phi_{k-1} + (Kt-k+1)\phi_k.
\]
Let $\Omega$ be defined as in Proposition \ref{prop:expfampath}.
The $K$-knot \emph{spline annealing path family} is defined as the set of $K$-knot linear spline paths such that
\[
\phi_0 = (1, 0), \quad \phi_K = (0, 1), \quad \text{and}\quad \forall k,\,\, \phi_k \in \Omega.
\]
\paragraph{Validity:} Since $\Omega$ is a convex set per the proof of Proposition \ref{prop:expfampath}, we 
are guaranteed that $\eta^{\phi}([0,1])\subseteq \Omega$, and so this collection of annealing paths is a 
subset of the exponential annealing family and hence forms a valid 
annealing path family targeting $\pi_1$.

\paragraph{Convexity:} Furthermore, convexity of $\Omega$ implies that tuning the knots
$\phi \in \Omega^{K+1}$ involves optimization within a convex constraint set. In practice, we enforce also
that the knots are monotone in each component---i.e., the first component monotonically decreases, $1 = \phi_{0,0} \ge \phi_{1,0} \ge \dots \ge \phi_{K,0} = 0$  and the
second increases, $0 = \phi_{0,1} \le \phi_{1,1} \le \dots \le \phi_{K,1} = 1$ ---such that the path of distributions always moves from the reference to the target.
Because monotonicity constraint sets are linear and hence convex, the overall monotonicity-constrained optimization 
problem has a convex domain. 

\paragraph{Flexibility:}
 Assuming the family is nonempty, it trivially contains the linear path.
 Further, given a large enough number of knots $K$, the spline annealing family well-approximates
subsets of the exponential annealing family for fixed $M>0$. In particular,
\begin{equation}
\sup_{\eta\in\mathcal{A}_M} \inf_{\phi\in\Omega^{K+1}} \|\eta^\phi-\eta\|_\infty \leq \frac{M}{4K^2}.
\end{equation}
Figure \ref{fig:spline} provides an illustration of the behaviour of optimized spline paths
for a Gaussian reference and target. The path takes a convex curved shape; starting at the bottom right 
point of the figure (reference), this path corresponds to increasing the variance of the reference,
shifting the mean from reference to target, and finally decreasing the variance to match the target.
With more knots, this process happens more smoothly. 

Appendix \ref{sec:gradient} provides an explicit derivation of the stochastic gradient estimates
we use to optimize the knots of the spline annealing path family.
It is also worth making two practical notes. 
First, to enforce the monotonicity constraint, we developed an alternative to the usual projection approach, as projecting into the set of monotone splines can cause several knots to become superposed. 
Instead, we maintain monotonicity
as follows: after each stochastic gradient step, we identify a monotone subsequence of knots containing the endpoints,
remove the nonmonotone jumps, and linearly interpolate between the knot subsequence with an even spacing. 
Second, we take gradient steps in a log-transformed space so that knot components are always strictly positive.

\begin{figure}[t!]
\vskip 0.2in
\begin{center}
\centerline{\includegraphics[width=.9\columnwidth]{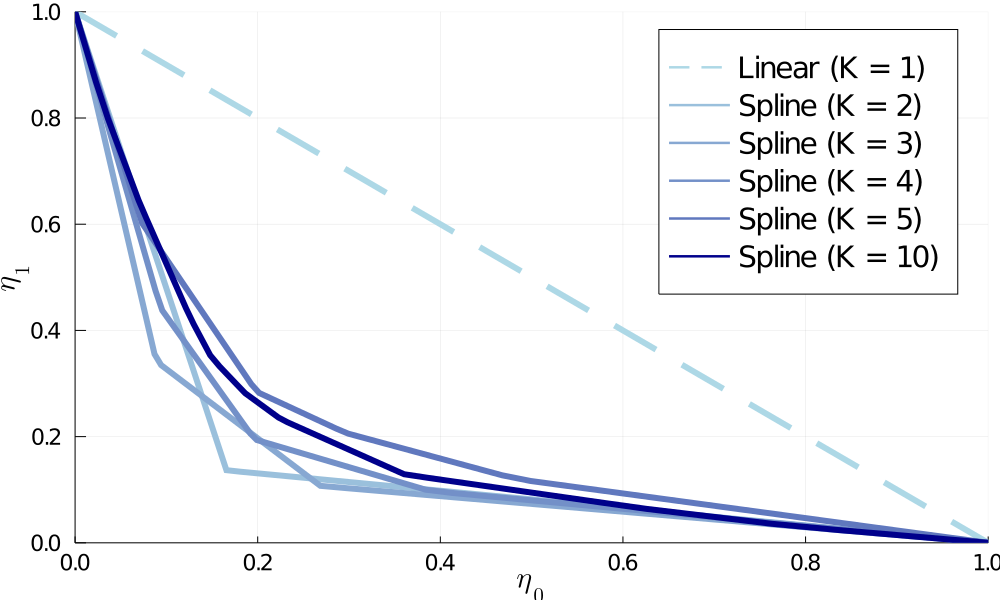}}
\caption{The spline path for $K=1,2,3,4,5,10$ knots for the family 
generated by $\pi_0 = N(-1,0.5 )$ and $\pi_1=N(1,0.5)$. }
\label{fig:spline}
\end{center}
\vskip -0.2in
\end{figure}

\begin{figure*}[t!]
\vskip 0.2in
\begin{center}
\centerline{
\includegraphics[width=0.68\columnwidth]{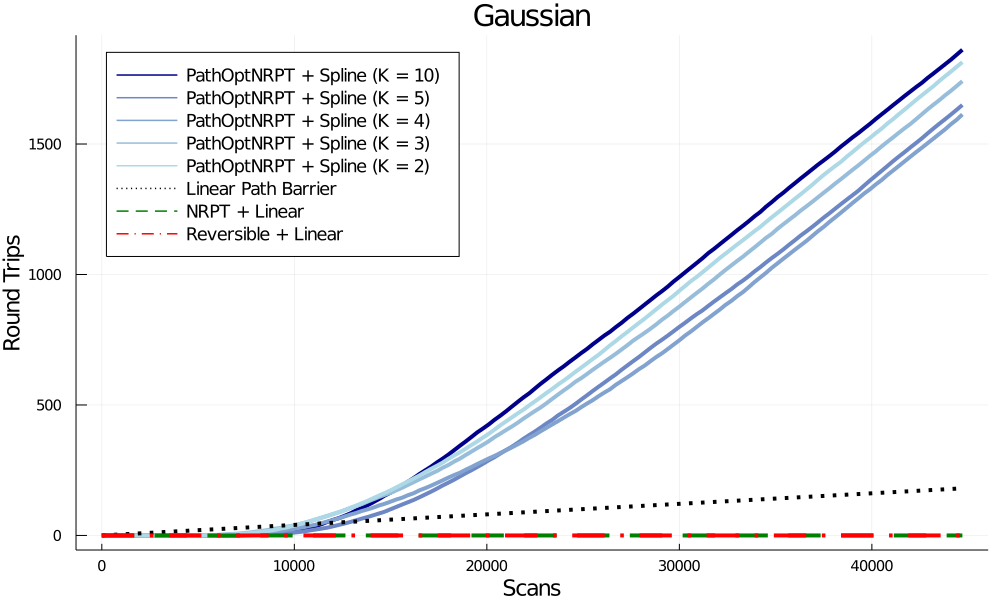}
\includegraphics[width=0.68\columnwidth]{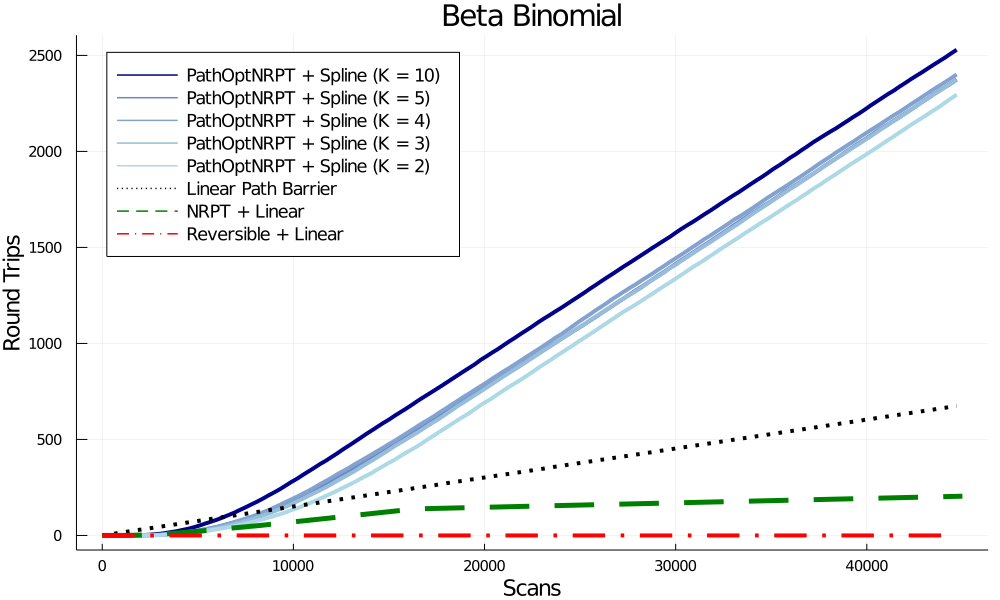}
\includegraphics[width=0.68\columnwidth]{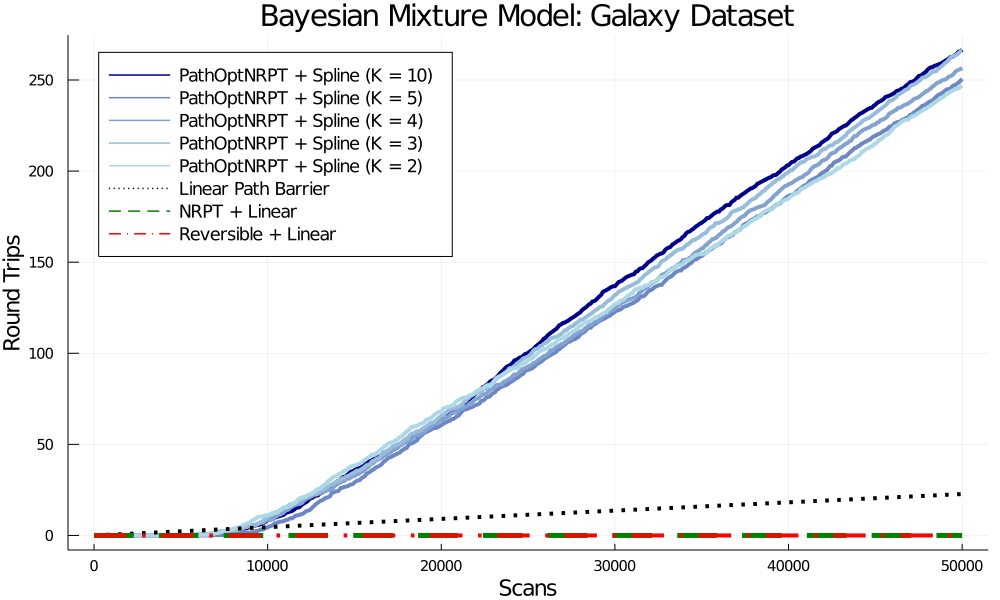}
}
\centerline{
\includegraphics[width=0.68\columnwidth]{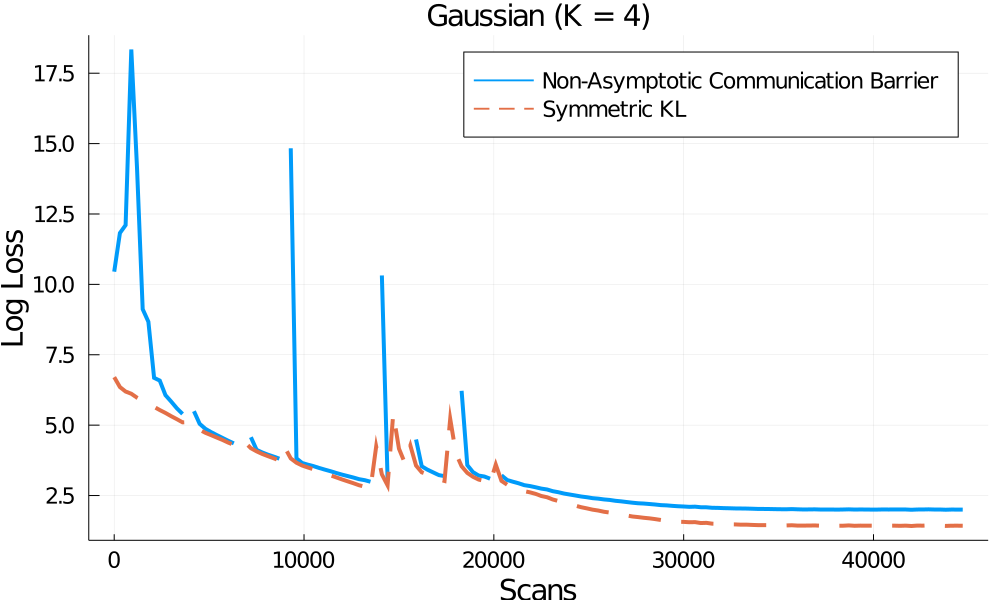}
\includegraphics[width=0.68\columnwidth]{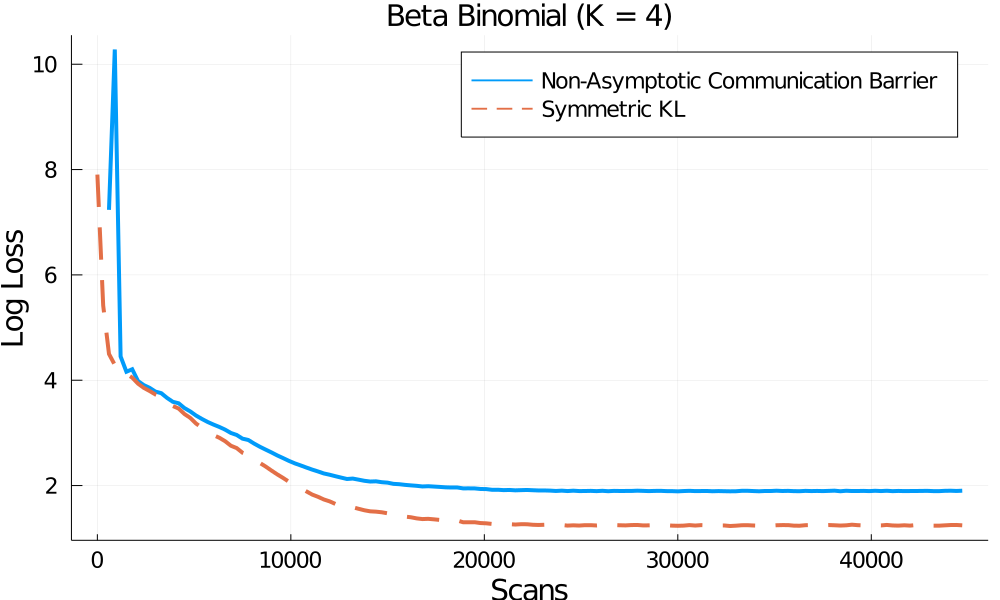}
\includegraphics[width=0.68\columnwidth]{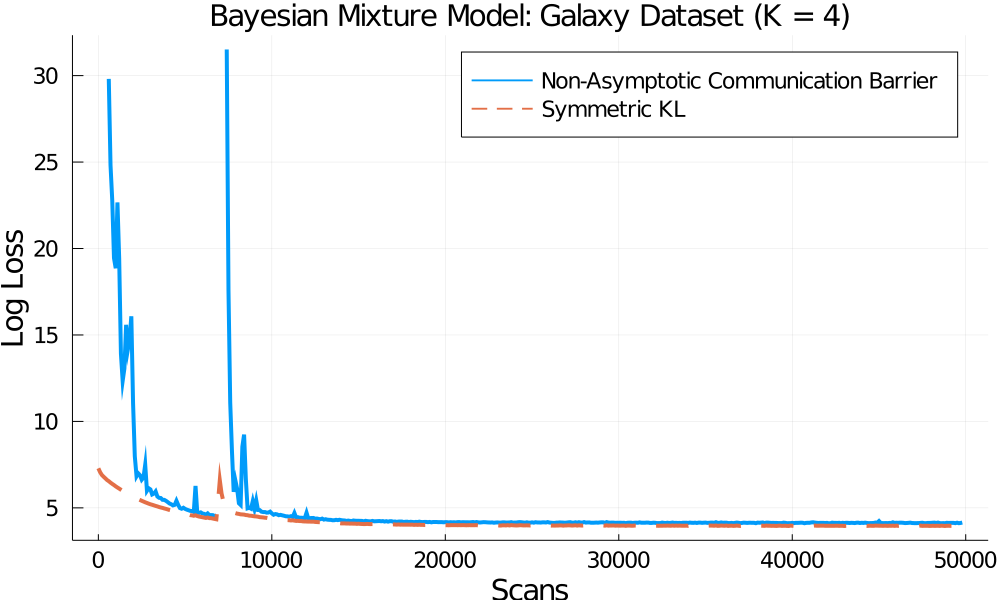}
}
\caption{\textbf{Top:} Cumulative round trips averaged over 10 runs for the spline path with $K=2,3,4,5,10$ (solid blue), NRPT using a linear path (dashed green), and reversible PT with linear path (dash/dot red). The slope of the lines represent the round trip rate. We observe large gains going from linear to non-linear paths ($K > 1$). For all values of $K>1$, the optimized spline path substantially improves on the theoretical upper bound on round trip rate possible using linear path (dotted black). \textbf{Bottom:} Non-asymptotic communication barrier from Equation \ref{eq:nonasympcb} (solid blue) and Symmetric KL (dash orange) as a function of iteration for one run of PathOptNRPT + Spline ($K=4$ knots).}
\label{fig:round_trips}
\end{center}
\vskip -0.2in
\end{figure*}

\section{Experiments}\label{sec:experiments}

In this section, we study the empirical performance of non-reversible PT based on the 
spline annealing path family ($K \in \{2, 3, 4, 5, 10\}$) from Section~\ref{sec:expfamspline}, with knots
and schedule optimized using the tuning method from Section~\ref{sec:general_annealing}. 
We compare this method to two PT methods based on standard linear paths:
non-reversible PT with adaptive schedule (``NRPT+Linear'') \cite{syed_non_reversible_2019},
and reversible PT (``Reversible+Linear'') \cite{atchade_towards_2011}.
Code for the experiments is available at \url{https://github.com/vittrom/PT-pathoptim}.

We use the terminology ``scan'' to denote one iteration
of the for loop in Algorithm~2. The computational cost of a scan is
comparable for all the methods, since the bottleneck is the local
exploration step shared by all methods.
These experiments demonstrate two primary conclusions: (1) tuned nonlinear paths provide
a substantially higher round trip rate compared to linear paths for the examples considered; and (2)
the symmetric KL sum objective (Equation~\ref{eq:bound}) is a good proxy for the 
round trip rate as a tuning objective.

% TODO: Talk about range of learning rates and what we found works well, number of samples between gradient updates, gradient tricks needed, etc.

% TODO: add more details to these.
We run the following benchmark problems; see the supplement for details. 
\textbf{Gaussian:} a synthetic setup in which
the reference distribution is $\pi_0=N(-1,0.01^2)$ and the target is
$\pi_1=N(1,0.01^2)$. For this example we used $N=50$ parallel chains and fixed
the computational budget to 45000 samples. For Algorithm \ref{alg:PT_tuning}, the computational
budget was divided equally over 150 scans, meaning 300 samples were used for
every gradient update. ``Reversible+Linear'' performed 45000 local exploration
steps with a communication step after every iteration while for ``NRPT+Linear''
the computational budget was used to adapt the schedule. The gradient updates
were performed using Adagrad \citep{duchi2011adaptive} with learning rate equal to 0.2.
\textbf{Beta-binomial model:} a conjugate Bayesian model
with prior $\pi_0(p) = \mathrm{Beta}(180,840)$ and likelihood
$L(x|p)=\mathrm{Binomial}(x|n,p)$. We simulated data $x_1,\dots,x_{2000}\sim
\mathrm{Binomial}(100,0.7)$ resulting in a posterior $\pi_1(p)=
\mathrm{Beta}(140180,60840)$. The prior and posterior are heavily  concentrated
at 0.2 and 0.7 respectively. We used the same settings as for the Gaussian
example.
\textbf{Galaxy data:} A Bayesian Gaussian mixture model applied to the galaxy dataset of \citep{roeder1990density}. We used six mixture components with mixture proportions $w_0, \ldots, w_5$, mixture component densities $N(\mu_i, 1)$ for mean parameters $\mu_0, \ldots, \mu_5$, and a cluster label categorical variable for each data point. We placed a $\mathrm{Dir}(\boldsymbol{1}_6)$ prior on the
proportions, where $\boldsymbol{1}_6 = (1,1,1,1,1,1)$ and a $N(150, 1)$ prior on each of the mean parameters. We did not marginalize the cluster indicators, creating a multi-modal posterior inference problem over 94 latent variables. In this experiment we used $N=35$ chains and fixed the computational
budget to 50000 samples, divided into 500 scans using 100 samples each. We
optimized the path using Adagrad with a learning rate of 0.3.
Exploring the full posterior distribution is challenging in this context due to a combination of misspecification of the prior and label switching. Label switching refers to the invariance of the likelihood under relabelling of the cluster labels. In Bayesian problems label switching can lead to increased difficulty of the sampling problem as it generates symmetric multi-modal posterior distributions.
\textbf{High dimensional Gaussian:} a similar setup to the one-dimensional Gaussian experiment where the number of dimensions ranges from $d=1$ to $d=256$. The reference distribution is  $\pi_0 = N(-\boldsymbol{1}_d, (0.1^2) I_d)$ and the target is $\pi_1 = N(\boldsymbol{1}_d, (0.1^2)I_d)$ where the subscript $d$ indicates the dimensionality of the problem. The number of chains $N$ is set to increase  with dimension at the rate $N = \lceil 15\sqrt{d} \rceil$. We fixed the number of spline knots $K$ to 4 and set the computational budget to 50000 samples divided into 500 scans with 100 samples per gradient update. The gradient updates
were performed using Adagrad with learning rate equal to 0.2. For all the experiments we performed one local exploration step before each
communication step.

\begin{figure}[t!]
\vskip 0.2in
\begin{center}
\centerline{\includegraphics[width=\columnwidth]{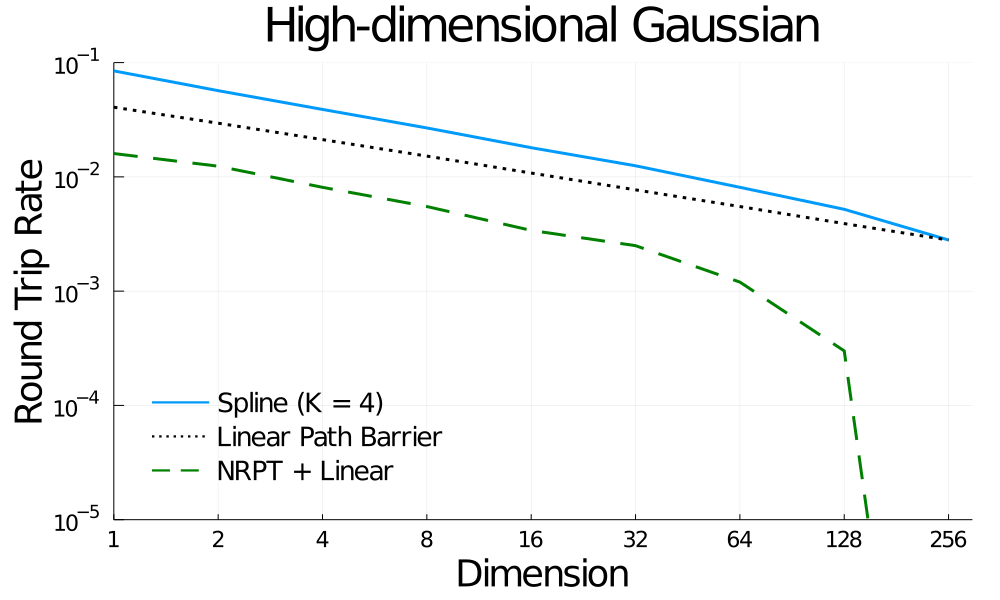}}
\caption{Round trips rate averaged over 5 runs for the spline path with $K=4$ (solid blue), NRPT using a linear path (dashed green) and theoretical upper bound on round trip rate possible using linear path (dotted black) as a function of the dimensionality of the target distribution.}
\label{fig:gaussian_dimensionality}
\end{center}
\vskip -0.2in
\end{figure}

The results of these experiments are shown in Figures~\ref{fig:round_trips} and \ref{fig:gaussian_dimensionality}.
Examining the top row of Figure~\ref{fig:round_trips}---which shows the number of round trips as a function
of the number of scans---one can see that PT using the spline annealing family
outperforms PT using the linear annealing path across all numbers of knots tested.
Moreover, the slope of these curves demonstrates that PT with the spline annealing family 
exceeds the theoretical upper bound of round trip rate for the linear annealing path (from
Equation~(\ref{eq:nonlineartauconv})). The largest gain is obtained from going
from $K=1$ (linear) to $K=2$. For all the examples, 
%the Gaussian and beta-binomial examples,
increasing the number of knots to more than $K>2$ leads to marginal
improvements.  In the case of the Gaussian example, note that since the global
communication barrier $\Lambda$ for the linear path is much larger than $N$,
algorithms based on linear paths incurred rejection rates of nearly one for
most chains, resulting in no round trips. 
%For the mixture example, we note
%that once one adds a large number ($K=10$) of knots, the path optimization
%proceeds more slowly.

The bottom row of Figure~\ref{fig:round_trips} shows the value of the surrogate SKL
objective and non-asymptotic communication barrier from Equation~(\ref{eq:nonasympcb}).
In particular, these figures demonstrate that the SKL provides a surrogate objective
that is a reasonable proxy for the non-asymptotic communication barrier, but does
not exhibit as large estimation variance in early iterations when there are pairs
of chains with rejection rates close to one.

Figure~\ref{fig:gaussian_dimensionality} shows the round trip rate as a function of the dimensionality of the problem for  Gaussian target distributions. As the dimensionality increases, the sampling problem becomes fundamentally more difficult, explaining the decay in performance of both NRPT with a linear path and an optimized path.	
Both sampling methods are provided with a fixed computational budget in all runs. 
Due to this fixed budget, NRPT is unable for $d \ge 64$ to approach the optimal schedule in the alloted time. 
This leads to an increasing gap between the round trip rates of NRPT with linear path and the spline path for $d \ge 64$.

\section{Discussion}
In this work, we identified the use of linear paths of distributions
as a major bottleneck in the performance of parallel tempering algorithms.
To address this limitation, we have provided a theory of parallel tempering on nonlinear
paths, a methodology to tune parametrized paths, and finally
 a practical, flexible family of paths based on linear splines.
Future work in this line of research includes extensions to estimate
normalization constants, as well as the development of techniques 
and theory surrounding the use of variational reference distributions.

\bibliography{references}
\bibliographystyle{icml2021}

\clearpage

\appendix
\section{Proof of Proposition \ref{prop:badlinear}}
Define $\pi_0=N(\mu_0,\sigma^2)$ and $\pi_1=N(\mu_1,\sigma^2)$ with
$W_i(x)\propto-\frac{1}{2\sigma^2}(x-\mu_i)^2$ (throughout we use the proportionality symbol $\propto$
with log-densities to indicate an unspecified  constant, additive with respect to $W_i$, multiplicative with respect to $\pi_t$).
Suppose $\pi_t$ is the linear 
path $\pi_t(x)\propto \exp(W_t)$ where $W_t=(1-t)W_0+tW_1$. Note that
as a function of $x$,
\begin{align*}
W_t(x)
&\propto-\frac{1-t}{2\sigma^2}(x-\mu_0)^2-\frac{t}{2\sigma^2}(x-\mu_1)^2\\
& \propto-\frac{1}{2\sigma^2}(x-\mu_t)^2, \quad \mu_t=(1-t)\mu_0+t\mu_1,
\end{align*}
and thus $\pi_t=N(\mu_t,\sigma^2)$.
Taking a derivative of $W_t$, we find that 
\[
\frac{\dee W_t}{\dee t} = \frac{(\mu_1-\mu_0)\left(x-\frac{\mu_0+\mu_1}{2}\right)}{\sigma^2}.
\]
We will now compute $\lambda(t)$. If $X_t,X'_t\sim \pi_t$, then
\begin{align*}
\lambda(t)
&=\frac{1}{2}\E\left[\left|\frac{dW}{dt}(X_t)-\frac{dW}{dt}(X'_t)\right|\right]\\
&=\frac{|\mu_1-\mu_0|}{2\sigma}\E\left[\left|\frac{X_t-\frac{\mu_0+\mu_1}{2}}{\sigma}-\frac{X_t'-\frac{\mu_0+\mu_1}{2}}{\sigma}\right|\right]\\
&=\frac{|\mu_1-\mu_0|}{2\sigma}\E\left[\left|\frac{X_t-\mu_t}{\sigma}-\frac{X_t'-\mu_t}{\sigma}\right|\right]\\
&=\frac{|\mu_1-\mu_0|}{2\sigma}\E\left[\left|Z-Z'\right|\right],
\end{align*}
where $Z,Z'\sim N(0,1)$. Thus $Z-Z'\sim N(0,2)$, and $|Z-Z'|$ has a folded
normal distribution with expectation $2/\sqrt{\pi}$. This implies
$\lambda(t)=z/\sqrt{\pi}$ where $z=|\mu_1-\mu_0|/\sigma$ and
$\Lambda=\int_0^1\lambda(t)\dee t=z/\sqrt{\pi}$. By Theorem
\ref{thm:general_GCB}, the asymptotic round trip rate
$\tau_\infty^{\mathrm{linear}}$ for the linear path satisfies,
\begin{align*}
\tau_\infty^{\mathrm{linear}}=\frac{1}{2+2\Lambda}=\frac{1}{2+2z/\sqrt{\pi}}=\Theta\left(\frac{1}{z}\right).
\end{align*}

We will now establish an upper bound for the communication barrier $\Lambda$ for a general path $\pi_t$. If $X_t,X_t'\sim\pi_t$, then Theorem \ref{thm:general_GCB} and Jensen's inequality imply the following:
\begin{align*}
\Lambda
&=\int_0^1\frac{1}{2}\E\left[\sqrt{\left(\frac{dW}{dt}(X_t)-\frac{dW}{dt}(X'_t)\right)^2}\right]\dee t\\
&\leq\int_0^1\frac{1}{2}\sqrt{\E\left[\left(\frac{dW}{dt}(X_t)-\frac{dW}{dt}(X'_t)\right)^2\right]}\dee t\\
&=\frac{1}{\sqrt{2}}\int_0^1\sqrt{\Var_{\pi_t} \left[\frac{dW_t}{dt}\right]}\dee t\\
&=\frac{1}{\sqrt{2}}\Lambda_F,
\end{align*}
where $\Lambda_F$ is the length of the the path $\pi_t$ with the Fisher
information metric. The geodesic path of Gaussians between $\pi_0$ and $\pi_1$ 
 that minimizes $\Lambda_F$ satisfies \citep[Eq.~11, Sec.~2]{costa2015fisher}
\begin{equation}
\Lambda_F =\sqrt{2}\log\left(1+\frac{z^2}{4} +\frac{z}{4}\sqrt{8+z^2}\right).
\end{equation} 
Again, by Theorem \ref{thm:general_GCB}, the asymptotic round trip rate
$\tau_\infty^{\mathrm{geodesic}}$ for the geodesic path satisfies
\begin{align*}
\tau_\infty^{\mathrm{geodesic}}=\frac{1}{2+2\Lambda}\geq \frac{1}{2+2\Lambda_F}=\Theta\left(\frac{1}{\log z}\right).
\end{align*}

\section{Proof of Lemma \ref{lemma:nonlinear_gcb_err}}
\begin{definition}
Given a path $\pi_t$ and measurable function $f$, we denote $\|f\|_\pi =\sup_t\E_{\pi_t}[f]$.
\end{definition}
Following the computation in \citet[Equation (6)]{predescu2004incomplete}, we have
\begin{align}\label{eq:rejection_formula_linear}
r(t,t')=1-\frac{\E[\exp(-\frac{1}{2}|A_{t,t'}(\tilde{X}_{1/2},\tilde{X}'_{1/2})|]}{\E[\exp(-\frac{1}{2}A_{t,t'}(\tilde{X}_{1/2},\tilde{X}'_{1/2}))]},
\end{align}
where $\tilde{X}_s,\tilde{X}'_s\sim \tilde{\pi}_s= \frac{1}{\tilde{Z}(s)}\exp((1-s)W_t+sW_{t'})$ and 
\[
A_{t,t'}(x,x')=(W_{t'}(x)-W_t(x))-(W_{t'}(x')-W_t(x')).
\]
In particular, the path of distributions $\tilde{\pi}_s$ for $s\in[0,1]$ is the linear path between $\pi_t$ and $\pi_{t'}$.

\begin{lemma}\label{lemma:acceptance_estimate}
Suppose (\ref{eq:V1_bound}) and (\ref{eq:V1_integrability}) hold. 
Then for all $k\leq 3$, there is a constant $\tilde{C}_k$ independent of $t,t',\schedule$ such that 
\[
\sup_s\E[|A_{t,t'}(\tilde{X}_s,\tilde{X}'_s)|^k]\leq \tilde{C}_k|t'-t|^k.
\]
where $\tilde{X}_s,\tilde{X}_s'\sim\tilde{\pi}_s$.
\end{lemma}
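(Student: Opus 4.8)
The plan is to reduce the claimed moment bound to a \emph{uniform} moment bound for $V_1$ under the linearized interpolants $\tilde{\pi}_s$, $s\in[0,1]$, and then to establish that bound.

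\emph{Step 1 (pass from $A_{t,t'}$ to $V_1$ via the fundamental theorem of calculus).} For each $x$ the map $u\mapsto W_u(x)$ is piecewise continuously differentiable with $|\dee W_u/\dee u(x)|\le V_1(x)$ by \eqref{eq:V1_bound}, so
\[
|W_{t'}(x)-W_t(x)| = \left|\int_t^{t'}\frac{\dee W_u}{\dee u}(x)\,\dee u\right| \le |t'-t|\,V_1(x),
\]
and hence $|A_{t,t'}(x,x')| \le |t'-t|\,(V_1(x)+V_1(x'))$ for all $x,x'\in\statespace$. Raising to the $k$-th power, using $(a+b)^k\le 2^{k-1}(a^k+b^k)$ for $k\ge 1$, and using that $\tilde X_s,\tilde X'_s$ are i.i.d.\ $\tilde{\pi}_s$, one gets $\E[|A_{t,t'}(\tilde X_s,\tilde X'_s)|^k] \le 2^k|t'-t|^k\,\E_{\tilde{\pi}_s}[V_1^k]$. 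Since $k\le 3$, the power-mean form of Jensen's inequality gives $\E_{\tilde{\pi}_s}[V_1^k]\le (\E_{\tilde{\pi}_s}[V_1^3])^{k/3}$. So the lemma follows with $\tilde C_k = 2^k (B^\star)^{k/3}$ once we show $B^\star := \sup_{t,t'\in[0,1]}\sup_{s\in[0,1]}\E_{\tilde{\pi}_s}[V_1^3] < \infty$.

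\emph{Step 2 (the uniform moment bound).} Write $\tilde{\pi}_s(x) = \pi_t(x)^{1-s}\pi_{t'}(x)^s/c(s)$ with $c(s) = \int \pi_t^{1-s}\pi_{t'}^s\,\dee x$, and let $B = \sup_{u\in[0,1]}\E_{\pi_u}[V_1^3]$, finite by \eqref{eq:V1_integrability}. Factoring $V_1^3\pi_t^{1-s}\pi_{t'}^s = (V_1^3\pi_t)^{1-s}(V_1^3\pi_{t'})^s$ and applying H\"older with exponents $1/(1-s)$ and $1/s$ bounds the numerator of $\E_{\tilde{\pi}_s}[V_1^3]$:
\[
\int V_1^3\,\pi_t^{1-s}\pi_{t'}^s\,\dee x \le \left(\E_{\pi_t}[V_1^3]\right)^{1-s}\left(\E_{\pi_{t'}}[V_1^3]\right)^{s} \le B .
\]
It remains to bound the normalizer $c(s)$ away from $0$ uniformly in $s,t,t'$, and I expect this to be the main obstacle: the crude pointwise estimates $c(s)\ge \exp(-s\,\KL(\pi_t\|\pi_{t'}))$ and $\tilde{\pi}_s \le \big((1-s)\pi_t + s\pi_{t'}\big)/c(s)$ are not uniform when $\pi_t$ and $\pi_{t'}$ are nearly mutually singular, so finiteness of $B^\star$ hinges on the cancellation between the small normalizer and the (small, in $\tilde{\pi}_s$-probability) mass that $V_1^3$ carries on the region where $\pi_t$ and $\pi_{t'}$ disagree.

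Two routes past this obstacle seem viable. The first is a compactness argument: $(t,t',s)\mapsto\E_{\tilde{\pi}_s}[V_1^3]$ is finite at every point of the compact cube $[0,1]^3$ (numerator finite by the H\"older bound, $c(s)>0$ since all densities have full support) and, via a dominated-convergence argument built on \eqref{eq:V1_bound}, continuous there, hence bounded. The second, which suffices for the concrete families used later, exploits extra structure: for exponential-family paths (in particular the spline family) each $\tilde{\pi}_s$ is itself an exponential-family member whose natural parameter lies on the chord joining those of $\pi_t$ and $\pi_{t'}$; since that chord stays in a fixed compact subset of the interior of the natural-parameter domain $\Omega$, all polynomial moments of the sufficient statistic, and hence of $V_1 = M\|W\|_2$, are bounded there uniformly. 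Either way one obtains $B^\star<\infty$, completing the proof; establishing the uniform normalizer bound in full generality from \eqref{eq:V1_bound}–\eqref{eq:V1_integrability} alone is the one delicate point.
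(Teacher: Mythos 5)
Your Step 1 is exactly the paper's argument: the Lipschitz bound $|W_{t'}(x)-W_t(x)|\le V_1(x)|t'-t|$ from \eqref{eq:V1_bound}, hence $|A_{t,t'}(x,x')|\le (V_1(x)+V_1(x'))|t'-t|$, followed by the elementary power inequality and a reduction to the third moment of $V_1$ under $\tilde\pi_s$. Where you and the paper diverge is the last step: the paper simply declares $\sup_s\E_{\tilde\pi_s}[V_1^3]$ finite ``by \eqref{eq:V1_integrability}'', even though that hypothesis controls moments under the path distributions $\pi_u$, not under the secant distributions $\tilde\pi_s$. You are right that this is the one nontrivial point and right not to wave it through. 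But you overestimate its delicacy, and neither of your proposed escapes is adequate as written: the compactness route still needs continuity of $(t,t',s)\mapsto\E_{\tilde\pi_s}[V_1^3]$, whose dominated-convergence justification runs into the very same normalizer problem, and the exponential-family route proves a weaker statement than the lemma claims.

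The gap closes with the tools you already have. Your H\"older bound gives $\int V_1^3\,\pi_t^{1-s}\pi_{t'}^s\,\dee x\le B$ with $B=\sup_u\E_{\pi_u}[V_1^3]$, and your ``crude'' Jensen bound $c(s)\ge\exp(-s\,\KL(\pi_t\|\pi_{t'}))$ is in fact uniform, because \eqref{eq:V1_bound} forces the path to have uniformly bounded KL increments: since the log-normalizers cancel in the symmetrized divergence,
\begin{align*}
\KL(\pi_t\|\pi_{t'})\;\le\;\SKL(\pi_t,\pi_{t'})
&=\E_{\pi_t}[W_t-W_{t'}]+\E_{\pi_{t'}}[W_{t'}-W_t]\\
&\le |t'-t|\left(\E_{\pi_t}[V_1]+\E_{\pi_{t'}}[V_1]\right)\;\le\;2B^{1/3},
\end{align*}
using nonnegativity of each KL term and $\E_{\pi_u}[V_1]\le(\E_{\pi_u}[V_1^3])^{1/3}$. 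Hence $c(s)\ge e^{-2B^{1/3}}$ and $\sup_{t,t',s}\E_{\tilde\pi_s}[V_1^3]\le B e^{2B^{1/3}}<\infty$ --- no exponential moments, no compactness, no special structure. The near-mutual-singularity you worry about is only possible at the price of a large $B$, which is absorbed into the constant $\tilde C_k$; for a fixed path satisfying \eqref{eq:V1_bound}--\eqref{eq:V1_integrability} it cannot break uniformity. With this paragraph added your proof is complete, and is strictly more careful than the paper's own.
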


\begin{proof}
The mean-value theorem and (\ref{eq:V1_bound}) imply that $W_t(x)$ is Lipschitz in $t$,
\begin{equation*}
|W_t(x)-W_{t'}(x)|\leq V_1(x)|t'-t|.
\end{equation*}
The triangle inequality therefore implies
\[
|A_{t,t'}(x,x')|\leq (V_1(x)+V_1(x'))|t'-t|.
\]
By taking expectations and using the fact $|a+b|^k\leq 2^{k-1}(|a|^k+|b|^k)$, we have that
\begin{align*}
\E[|A_{t,t'}(\tilde X_s,\tilde X'_s)|^k]
&\leq 2^k\E_{\tilde{\pi}_s}[V_1^k]|t'-t|^k\\
&\leq 2^k\E_{\tilde{\pi}_s}[V_1^3]|t'-t|^k,
\end{align*}
where in the last line we use the fact that we can assume $V_1 \ge 1$ without loss of generality. 
The result follows by taking the supremum on both sides and noting that $\tilde{C}_k = 2^k\|V_1^3\|_{\tilde{\pi}}$ is finite by  (\ref{eq:V1_integrability}).
\end{proof}

We now begin the proof of Lemma \ref{lemma:nonlinear_gcb_err}.
Define $\tilde{\lambda}(s)=\frac{1}{2}\E[|A_{t,t'}(\tilde{X}_s,\tilde{X}'_s)|]$ for $\tilde{X}_s,\tilde{X}'_s\sim \tilde{\pi}_s$.
Then a third order Taylor expansion of Equation \eqref{eq:rejection_formula_linear} \citep{predescu2004incomplete},
which contains terms of the form $\E[|A_{t,t'}(\tilde{X}_s,\tilde{X}'_s)|^k]$ that can be 
controlled via Lemma \ref{lemma:acceptance_estimate}, yields
\begin{align*}
r(t,t')&=\tilde{\lambda}(1/2)+R(t,t'), \quad |R(t,t')| \leq C'|t-t'|^3,
%&\leq \frac{1}{48}\sup_s \E[|A_{t,t'}(\tilde X_{s},\tilde X'_{s})|^3]\\
%&\quad+\frac{1}{16}\sup_s \tilde\lambda(s)\sup_s\E[|A_{t,t'}(\tilde X_{s},\tilde X'_{s})|^2].
\end{align*}
for some finite constant $C'$ independent of $t, t'$.
%
%{\color{blue} I didnt include this earlier since this is basically repeating the computation already in the literature and it is ugly}
%This combined with Lemma \eqref{lemma:acceptance_estimate} establishes the following estimate:
%\begin{align*}
%&\left|r(t,t')-\tilde\lambda(1/2)\right|\leq  C'|t-t'|^3,
%\end{align*}
%and $C'$ is a constant independent of $t,t'$. 
By \citet[Prop.~2, Appendix C]{syed_non_reversible_2019} we 
have that in addition, $\tilde\lambda(s)$ is in $C^2([0,1])$,
and thus there is a constant $C''$ independent of $t,t'$ such that
\[
\sup_s\left|\frac{d^2\tilde\lambda}{ds^2}\right|\leq C''|t-t'|^3.
\]
The error bound for the midpoint rule implies,
\begin{align*}
\left|\tilde\lambda(1/2)-\int_0^1\tilde\lambda(s)ds\right|
&\leq \frac{1}{24}\sup_s\left|\frac{d^2\tilde\lambda}{ds^2}\right|\\
&\leq \frac{C''}{24}|t-t'|^3.
\end{align*}
The result follows:  there is a finite constant $C$ independent of $t,t'$ such that
\begin{align*}
\left|r(t,t') - \Lambda(t,t')\right| = \left| r(t,t')-\int_0^1\tilde{\lambda}(s)ds\right| \leq C|t'-t|^3.
\end{align*}

\section{Proof of Theorem \ref{thm:general_GCB}}
We first note that without loss of generality we can place an artificial schedule point $t_n$ at 
each of the finitely many discontinuities in $W_t$ or its first/second derivative.
Thus we assume the $W_t$ is $C^2$ on
each interval $[t_{n-1},t_n]$. Later in the proof it will become clear that the contributions
of these artificial schedule points becomes negligible as $\|\schedule\|\to 0$.

Given a schedule $\schedule$, define the path 
$\tilde\pi_t=\frac{1}{\tilde Z_t} \exp(\tilde W_t)$ 
with log-likelihood $\tilde W_t$ 
satisfying for each segment $t_{n-1}\leq t \leq  t_n$,
\[
\tilde W_t = W_{t_{n-1}}+\frac{\Delta W_n}{\Delta t_n}(t-t_{n-1}), 
\]
where $\Delta W_n= W_{t_n}-W_{t_{n-1}}$ and $\Delta t_n = t_{n}-t_{n-1}$. In
particular, $\tilde{W}_t$ agrees with $W_t$ for $t\in\schedule$, linearly
interpolates between $W_{t_{n-1}}$ and $W_{t_n}$ for $t\in[t_{n-1},t_n]$,
and for all $x$, from Taylor's theorem:
\begin{equation}\label{eq:spline_error}
|\tilde W_t(x)-W_t(x)|\leq \frac{1}{2}\sup_{t\in[t_{n-1},t_n]}\left|\frac{d^2W_t}{dt^2}(x)\right|\Delta t_n^2.
\end{equation}
The following lemma shows that the normalization constant of, and expectations under, $\tilde \pi_t$ 
are comparable to the same for $\pi_t$ with an error bound that depends on $\|\schedule\|$
and converges to 0 as $\|\schedule\|\to 0$.

\begin{lemma}\label{lemma:spline_expectation_error} 
For measurable functions $f$ and $s > 0$, let 
\[
E_t(f, s) = \E_{\pi_t}\left[|f|e^{s^2 V_2}\right],
\]
and define $E_t(s) = E_t(1, s)$ for brevity.
\begin{enumerate}
\item[(a)]  For any schedule $\schedule$, 
\begin{align*}
\left|\frac{\tilde Z_t}{Z_t} - 1\right| \leq
E_t(\|\schedule\|) - 1,
\end{align*}
and if $\|\schedule\|$ is small enough that $E_t(\|\schedule\|) < 2$,
\begin{align*}
\left|\frac{Z_t}{\tilde Z_t} - 1\right| \leq
\frac{E_t(\|\schedule\|)-1}{2-E_t(\|\schedule\|)}.
\end{align*}
\item[(b)] For any schedule $\schedule$ and measureable function $f$,
if $\|\schedule\|$ is small enough that $E_t(\|\schedule\|) < 2$,
\begin{align*}
\left|\E_{\tilde\pi_t}[f] - \E_{\pi_t}[f]\right| &\leq  \frac{E_t(\|\schedule\|)-1}{2-E_t(\|\schedule\|)}E_t(f, \|\schedule\|)\\
& +E_t(f, \|\schedule\|) - E_t(f, 0).
\end{align*}
\end{enumerate}
\end{lemma}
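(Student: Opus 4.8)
The plan is to treat $\tilde\pi_t$ as an exponentially tilted version of $\pi_t$ and reduce both claims to two elementary facts. The first is the pointwise spline error bound already recorded in \eqref{eq:spline_error}: since every segment satisfies $\Delta t_n \le \|\schedule\|$ and $|\dee^2 W_t/\dee t^2| \le V_2$, we get $|\tilde W_t(x) - W_t(x)| \le \tfrac12 V_2(x)\|\schedule\|^2 \le \|\schedule\|^2 V_2(x)$ for all $x$ and all $t$, so throughout we may set $s = \|\schedule\|$ and use $|\tilde W_t - W_t| \le s^2 V_2$. The second is the scalar inequality $|e^y - 1| \le e^{|y|} - 1$, valid for every real $y$, which converts this additive control on log-densities into multiplicative control by $e^{s^2 V_2}$; note that $E_t(\|\schedule\|) = \E_{\pi_t}[e^{s^2 V_2}]$ and $E_t(f,\|\schedule\|) = \E_{\pi_t}[|f| e^{s^2 V_2}]$ are exactly the quantities in the statement (and are finite under \eqref{eq:mgf} once $\|\schedule\| \le \sqrt\epsilon$).

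For part (a), first I would write the ratio of normalizing constants as an expectation under $\pi_t$: $\tilde Z_t / Z_t = \int e^{\tilde W_t - W_t} e^{W_t}\,\dee x / Z_t = \E_{\pi_t}[e^{\tilde W_t - W_t}]$. Then, using $|\E[\,\cdot\,]|\le\E[|\,\cdot\,|]$ and $|e^y-1|\le e^{|y|}-1$,
\[
\left|\frac{\tilde Z_t}{Z_t} - 1\right| \le \E_{\pi_t}\!\left[\left|e^{\tilde W_t - W_t} - 1\right|\right] \le \E_{\pi_t}\!\left[e^{s^2 V_2} - 1\right] = E_t(\|\schedule\|) - 1,
\]
which is the first inequality. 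For the reciprocal bound, when $E_t(\|\schedule\|) < 2$ the previous display forces $\tilde Z_t/Z_t \ge 2 - E_t(\|\schedule\|) > 0$, and hence
\[
\left|\frac{Z_t}{\tilde Z_t} - 1\right| = \frac{\left|\tilde Z_t/Z_t - 1\right|}{\tilde Z_t/Z_t} \le \frac{E_t(\|\schedule\|) - 1}{2 - E_t(\|\schedule\|)}.
\]

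For part (b), I would use the matching change-of-measure identity $\E_{\tilde\pi_t}[f] = \frac{Z_t}{\tilde Z_t}\,\E_{\pi_t}[f e^{\tilde W_t - W_t}]$ and split the error into a normalizing-constant part and a tilting part:
\[
\E_{\tilde\pi_t}[f] - \E_{\pi_t}[f] = \left(\frac{Z_t}{\tilde Z_t} - 1\right)\E_{\pi_t}\!\left[f e^{\tilde W_t - W_t}\right] + \E_{\pi_t}\!\left[f\left(e^{\tilde W_t - W_t} - 1\right)\right].
\]
The first term is bounded by the part (a) estimate for $|Z_t/\tilde Z_t - 1|$ times $\left|\E_{\pi_t}[f e^{\tilde W_t - W_t}]\right| \le \E_{\pi_t}[|f| e^{s^2 V_2}] = E_t(f, \|\schedule\|)$; the second term, again by $|e^y - 1| \le e^{|y|} - 1$, is at most $\E_{\pi_t}[|f|(e^{s^2 V_2} - 1)] = E_t(f, \|\schedule\|) - E_t(f, 0)$. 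Summing these two bounds gives exactly the claimed inequality.

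Since the whole argument is algebraic, there is no real obstacle; the only points requiring care are keeping track of the hypothesis $E_t(\|\schedule\|) < 2$ (which is what makes $\tilde Z_t/Z_t$, and hence $\E_{\tilde\pi_t}$, comparable to $\pi_t$ in the first place) and checking that the $E_t$-quantities fed into the downstream proof of Theorem~\ref{thm:general_GCB} (where $f$ will be a polynomial in $V_1$) are finite, which is precisely what \eqref{eq:mgf} guarantees. For the lemma in isolation the inequalities hold verbatim, becoming vacuous whenever a right-hand side is infinite.
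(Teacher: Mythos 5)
Your proof is correct and follows essentially the same route as the paper's: the same change-of-measure identity $\tilde Z_t/Z_t = \E_{\pi_t}[e^{\tilde W_t - W_t}]$, the same use of $|e^y-1|\le e^{|y|}-1$ together with the spline error bound \eqref{eq:spline_error}, and the identical two-term decomposition of $\E_{\tilde\pi_t}[f]-\E_{\pi_t}[f]$ in part (b). The only difference is that you spell out the ``straightforward algebraic manipulation'' giving the reciprocal bound in part (a), which the paper leaves implicit.
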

\begin{proof}
\begin{enumerate}
\item[(a)] We rewrite the expression
\begin{align*}
\frac{\tilde{Z}_t}{Z_t}
&=\frac{1}{Z_t}\int_\statespace e^{\tilde{W}_t(x)}\dee x\\
&=\int_\statespace e^{\tilde{W}_t(x)-W_t(x)}\pi_t(x)\dee x\\
&=1 + \int_\statespace \left(e^{\tilde{W}_t(x)-W_t(x)} - 1\right)\pi_t(x)\dee x.
\end{align*}
Thus using the inequality $|e^x-1| \leq e^{|x|}-1$,
\begin{align*}
\left|\frac{\tilde{Z}_t}{Z_t} - 1\right| &\leq \left|\int_\statespace \left(e^{\tilde{W}_t(x)-W_t(x)} - 1\right)\pi_t(x)\dee x\right|\\
&\leq \int_\statespace \left(e^{|\tilde{W}_t(x)-W_t(x)|} - 1\right)\pi_t(x)\dee x\\
&\leq \int_\statespace \left(e^{V_2(x)\|\schedule\|^2} - 1\right)\pi_t(x)\dee x\\
&= \E_{\pi_t}\left[e^{\|\schedule\|^2 V_2}-1\right]\\
&= E_t(\|\schedule\|)-1.
\end{align*}
The bound on $|Z_t/\tilde Z_t - 1|$ arises from straightforward algebraic manipulation of the above bound.

\item[(b)] We begin by rewriting $\E_{\tilde{\pi}_t}[f]$:
\begin{align*}
&\E_{\tilde{\pi}_t}[f] -\E_{\pi_t}[f]\\
&=\frac{1}{\tilde{Z}_t}\int_\statespace f(x)e^{\tilde{W}_t(x)}\dee x-\E_{\pi_t}[f]\\
&=\int_\statespace f(x)\left(\frac{Z_t}{\tilde{Z}_t}e^{\tilde{W}_t(x)-W_t(x)}-1\right)\pi_t(x)\dee x\\
&=\left(\frac{Z_t}{\tilde{Z}_t}-1\right)\int_\statespace f(x)e^{\tilde{W}_t(x)-W_t(x)}\pi_t(x)\dee x \\
&+\int_\statespace f(x)\left(e^{\tilde{W}_t(x)-W_t(x)}-1\right)\pi_t(x)\dee x.
\end{align*}
Therefore again using $|e^x-1| \leq e^{|x|}-1$ and the previous bound,
\begin{align*}
\left|\E_{\tilde{\pi}_t}[f] -\E_{\pi_t}[f]\right|
&\leq \frac{E_t(\|\schedule\|)-1}{2-E_t(\|\schedule\|)}E_t(f, \|\schedule\|)\\
& +E_t(f, \|\schedule\|) - E_t(f, 0).
\end{align*}
\end{enumerate}
\end{proof}

By changing variables via $t = t_{n-1} +s\Delta t_n$ in \eqref{eq:def_GCB_linear}, we can 
rewrite $\Lambda(t_{n-1},t_n)$ as
\[
\Lambda(t_{n-1},t_n)=\int_{t_{n-1}}^{t_n} \frac{1}{2}\E\left[\left|\frac{\Delta W_n}{\Delta t_n}(\tilde{X}_t)-\frac{\Delta W_n}{\Delta t_n}(\tilde{X}'_t)\right|\right]\dee t,
\]
where $\tilde{X}_t,\tilde{X}'_t\sim \tilde \pi_t$.  Note that by construction
for $t \in (t_{n-1},t_n)$ we have $\frac{d\tilde{W}_t}{dt}$ exists and equals
$\frac{\Delta W_n}{\Delta t_n}$. So by summing over $n$ we get,
\begin{align*}
\Lambda(\schedule)
&=\sum_{n=1}^N\Lambda(t_{n-1},t_n)\\
&=\int_0^1 \frac{1}{2}\E\left[\left|\frac{d\tilde W_t}{dt}(\tilde X_t)-\frac{d\tilde W_t}{dt}(\tilde X_t')\right|\right]dt\\
&=\int_0^1\tilde\lambda(t)dt
\end{align*}

If we can show that $\sup_{t}|\tilde{\lambda}(t)-\lambda(t)|$ converges uniformly\footnote{We say $a(\schedule)$ converges uniformly to $a$ if for all $\epsilon>0, \exists\delta>0$ such that $\|\schedule\|<\delta$ implies $|a(\schedule)-a|<\epsilon$. } to 0 as $\|\schedule\|\to 0$ then by dominated convergence theorem $\Lambda(\schedule)$ converges to $\Lambda$ uniformly as $\|\schedule\|\to 0$. The round trip rate then uniformly converges to $(2+2\Lambda)^{-1}$ by Theorem 3 of \cite{syed_non_reversible_2019}.

Adding and subtracting 
$\E\left[\left| \frac{d\tilde W_t}{dt}(X_t) - \frac{d\tilde W_t}{dt}(X'_t)\right|\right]$
within the absolute difference $2|\tilde \lambda(t) - \lambda(t)|$ and using the triangle
inequality, it can be shown that we require bounds on
\[
J_{1,t} = \int \pi_t(x)\pi_t(y){\scriptstyle \left| | \frac{d\tilde W_t}{dt}(x) - \frac{d\tilde W_t}{dt}(y)| - | \frac{d W_t}{dt}(x) - \frac{d W_t}{dt}(y)| \right| }
\]
and
\[
J_{2,t} = \int \left|\pi_t(x)\pi_t(y)-\tilde\pi_t(x)\tilde\pi_t(y)\right|\left| \frac{d\tilde W_t}{dt}(x) - \frac{d\tilde W_t}{dt}(y)\right|. 
\]
For the first term, the mean value theorem implies that there exist $s, s' \in [t_{n-1}, t_n]$ (potentially functions of $x$ and $y$, respectively) such that
\[
J_{1,t} = \int \pi_t(x)\pi_t(y) {\scriptstyle \left| | \frac{dW_s}{dt}(x) - \frac{d W_{s'}}{dt}(y)| - | \frac{d W_t}{dt}(x) - \frac{d W_t}{dt}(y)| \right| }
\]
Split the integral into the set $A$ of $x, y \in \statespace$ where the first term
in the absolute value is larger; the same analysis with the same result
applies in the other case in $A^c$. Here, Taylor's theorem and the triangle inequality yield
\begin{align*}
\left| \frac{dW_s}{dt}(x) - \frac{d W_{s'}}{dt}(y)\right| &\leq
\left| \frac{dW_t}{dt}(x) - \frac{d W_{t}}{dt}(y)\right|\\ & + (V_2(x)+V_2(y))\|\schedule\|.
\end{align*}
Using this and the same procedure for $A^c$, we have that
\begin{align*}
J_{1,t} &\leq \int \pi_t(x)\pi_t(y) (V_2(x)+V_2(y))\|\schedule\|\\
 &=2\E_{\pi_t}\left[V_2\right]\|\schedule\|.
\end{align*}
This converges to 0 as $\|\schedule\|\to 0$.

For the second term $J_{2,t}$, we can again use the mean value theorem to
find $s, s' \in[t_{n-1}, t_n]$ where 
\begin{align*}
J_{2,t} &= \int \left|\pi_t(x)\pi_t(y)-\tilde\pi_t(x)\tilde\pi_t(y)\right|\left| \frac{dW_s}{dt}(x) - \frac{d W_{s'}}{dt}(y)\right|, 
\end{align*}
and therefore via the triangle inequality, symmetry, and the $V_1(x)$ bound on the first path derivative,
\begin{align*}
J_{2,t} &\leq 2\int V_1(x) \left|\pi_t(x)\pi_t(y)-\tilde\pi_t(x)\tilde\pi_t(y)\right|. 
\end{align*}
We then add and subtract $\pi_t(x)\tilde\pi_t(y)$ within the absolute value and use the triangle inequality again to find that
\begin{align*}
J_{2,t} &\leq 2\int \left(V_1(x) + \E_{\pi_t}[V_1]\right) \left|\pi_t(x)-\tilde\pi_t(x)\right|\\ 
 &= 2\int \pi_t(x)\left(V_1(x) + \E_{\pi_t}[V_1]\right) \left|1-\frac{\tilde\pi_t(x)}{\pi_t(x)}\right|. 
\end{align*}
Note that by the triangle inequality and the bound $|e^x-1|\leq e^{|x|}-1$,
\begin{align*}
\left|1-\frac{\tilde\pi_t(x)}{\pi_t(x)}\right| &\leq 
\left|\frac{Z_t}{\tilde Z_t}-1\right|e^{\|\schedule\|^2 V_2(x)} +e^{\|\schedule\|^2V_2(x)} - 1.
\end{align*}
Assume that $\|\schedule\|$ is 
small enough such that $E_t(\|\schedule\|) < 2$,
and let $f = V_1 + \E_{\pi_t}V_1$. Then by 
Lemma \ref{lemma:spline_expectation_error}, 
\begin{align*}
J_{2,t} &\leq 2
\frac{E_t(\|\schedule\|)-1}{2-E_t(\|\schedule\|)} E_t(f, \|\schedule\|)\\ 
&+ E_t(f, \|\schedule\|) - E_t(f, 0).
\end{align*}

By assumption we know that $E_t(f, s)$ is finite for some $s$ small enough.
Therefore as $\|\schedule\|\to 0$, by monotone convergence $E_t(f, \|\schedule\|) \to E_t(f, 0)$,
and in particular $E_t(\|\schedule\|) \to 1$. 
Therefore $J_{1,t} + J_{2,t} \to 0$ as $\|\schedule\|\to 0$ and the proof is complete.

\section{Objective and Gradient}\label{sec:gradient}
Here we derive the gradient used to optimize the surrogate SKL objective in
Equation {\ref{eq:bound}}. First we derive the gradient for the expectation of a general
function in Section {\ref{subsec:gradient_expectation}}. Next, in
Section {\ref{subsec:gradient_exponential_linear}}, we show the result for the specific
case of expectations of linear functions with respect to distributions in the
exponential family. Lastly, we show how the result is related to our SKL
objective in Sections {\ref{subsec:SKL_general}} and {\ref{subsec:SKL_expfamily}}.

\subsection{Derivative of parameter-dependent expectation} \label{subsec:gradient_expectation}
Here we consider the problem of computing
\[
g_\phi(x) = \nabla_\phi \int_{\statespace} \pi_\phi(x) J_\phi(x) \mathrm{d}x
\]
where $\pi_\phi(x) = Z(\phi)^{-1}\exp(W_\phi(x))$, $Z(\phi) = \int_\statespace \exp(W_\phi(x))\ud x$ and $J_\phi(x)$ is a function depending on $\phi$. Assuming we can interchange the gradient and the expectation and using the product rule we can rewrite:
\[
g_\phi(x)=\int_\statespace \left( J_\phi(x)\nabla_\phi\pi_\phi(x)+ \pi_\phi(x)\nabla_\phi J_\phi(x) \right) \mathrm{d}x.
\]
Using $\nabla_\phi \pi_\phi(x) = \pi_\phi(x)\nabla_\phi \log \pi_\phi(x)$,
\[
g_\phi(x)=\int_\statespace \pi_\phi(x) (J_\phi(x)\nabla_\phi\log\pi_\phi(x)+ \nabla_\phi J_\phi(x)) \mathrm{d}x.
\]
From the definition of $\pi_\phi(x)$, we can evaluate the score function as
\begin{align*}
\nabla_\phi \log \pi_\phi(x) &= -\nabla_\phi \log Z(\phi) + \nabla_\phi W_\phi(x)\\
&= -\E\left[\nabla_\phi W_{\phi}(x)\right] + \nabla_\phi W_\phi(x).
\end{align*}
Substitute this in $g_\phi(x)$ we obtain,
\begin{align*}
g_\phi(x)&= \int_\statespace \pi_\phi(x) J_\phi(x)(-\E\left[\nabla_\phi W_{\phi}(x)\right] + \nabla_\phi W_\phi(x))\mathrm{d}x  \\ 
& \quad + \int_\statespace \pi_\phi(x) \nabla_\phi J_\phi(x)\mathrm{d}x  \\
&=-\E[J_\phi(x)]\E[\nabla_\phi W_\phi(x)] + \E[J_\phi(x)\nabla_\phi W_\phi(x)] \\
& \quad + \E[\nabla_\phi J_\phi(x)]\\
&= \text{Cov}[\nabla_\phi W_\phi(x),J_\phi(x)]+ \E[\nabla_\phi J_\phi(x)].
\end{align*}

\subsection{Exponential family and linear function}\label{subsec:gradient_exponential_linear}
The gradient derived in the previous section can easily be applied to expectations with respect to functions linear in $\phi$ under distributions in the exponential family. Let $J_\phi(x) = \xi_J(\phi)^T J(x)$ be a linear function in $\phi$ and suppose $W_\phi(x) = \xi_W(\phi)^T W(x)$ for some functions $\xi_J : \R^d \to \R^n$, $J : \statespace \to \R^n$ and $\xi_W : \R^d \to \R^m$, $W : \statespace \to \R^m$. Then
\begin{align*}
g_\phi(x) &= \text{Cov}[\nabla_\phi W_\phi(x),J_\phi(x)]+ \E[\nabla_\phi J_\phi(x)]\\
%&=\text{Cov}[\nabla_\phi \xi_W(\phi)^T W(x), J^T(x)\xi_J(\phi)]\\
%&\quad + \E[\nabla_\phi \xi_J(\phi)^T J(x)] \\
&= \nabla_\phi \xi_W(\phi)^T \text{Cov}[W(x), J^T(x)]\xi_J(\phi) \\
& \quad + \nabla_\phi \xi_J(\phi)^T \E[J(x)]
\end{align*}
where $\nabla_\phi \xi(\phi)^T$ is the transposed Jacobian of $\xi$.

\subsection{Symmetric KL: general case}\label{subsec:SKL_general}
Next we show that the symmetric KL divergence of Equation {\ref{eq:bound}} can
be rewritten as a sum of expectations over functions parametrized by $\phi$,
hence falling in the framework presented above.

For path parameter $\phi$, the symmetric KL divergence is
\begin{align*}
\LSKL(\phi) &= \sum_{n=0}^{N-1} \text{SKL}(\pi_{t_{n}}^\phi, \pi_{t_{n+1}}^\phi) \\
&= \sum_{n=0}^{N-1} \E\left[\log\frac{\pi_{t_{n+1}}^\phi(X_{n+1})}{\pi_{t_n}^\phi(X_{n+1})} +  \log\frac{\pi_{t_{n}}^\phi(X_n)}{\pi_{t_{n+1}}^\phi(X_n)}\right]
\end{align*}
where $X_n\sim \pi_{t_{n}}^\phi$. After cancellation of the normalization constants we obtain
\begin{align*}
&\LSKL(\phi) =\\
& \sum_{n=0}^{N-1}\E\big[W_{t_{n+1}}^\phi(X_{n+1})-W_{t_n}^\phi(X_{n+1}) \\
&+ W_{t_n}^\phi(X_n) - W_{t_{n+1}}^\phi(X_n)\big].
\end{align*}
Collecting expectations under the same distribution and rearranging terms,
\begin{align*}
&\LSKL(\phi) =\\
&\E[W_{t_0}^\phi(X_0) - W_{t_1}^\phi(X_0)] +\\
&\sum_{n=1}^{N-1} \E[ 2W_{t_n}^\phi(X_n) - W_{t_{n+1}}^\phi(X_n) - W_{t_{n-1}}^\phi(X_n) ] + \\
& \E[W_{t_N}^\phi(X_N) - W_{t_{N-1}}^\phi(X_N)].
\end{align*}
Defining for $n=1, \dots, N-1$,
\begin{align*}
J_{0}^\phi(x) &= W_{t_0}^\phi(x) - W_{t_1}^\phi(x)\\
J_{n}^\phi(x) &= 2W_{t_n}^\phi(x) - W_{t_{n+1}}^\phi(x) - W_{t_{n-1}}^\phi(x)\\
J_{N}^\phi(x) &= W_{t_N}^\phi(x) - W_{t_{N-1}}^\phi(x),
\end{align*}
we have that
\[
\LSKL(\phi) = \sum_{n=0}^N \E[J_{n}^\phi(X_n)]
\]
and 
\[
\nabla_\phi \LSKL(\phi) = \sum_{n=0}^N \nabla_\phi\E[J_{n}^\phi(X_n)]
\]
where $\nabla_\phi \E[J_{n}^\phi(X_n)]$ can be computed using the formula derived in Section {\ref{subsec:gradient_expectation}}.

\subsection{Symmetric KL: exponential family case}\label{subsec:SKL_expfamily}
For the spline family introduce in Section~\ref{sec:expfamspline}, the distributions $\pi_{t_n}^\phi$ are in the exponential family with,
\[
W_{t_n}^\phi(x) = {\eta^\phi(t_n)}^T W(x), \quad n=0, \ldots, N.
\]
It follows that the functions $J_{n}^\phi$ are linear in $\phi$ with
\begin{align*}
J_{0}^\phi(x) &= {z^\phi_0}^TW(x)\\
J_{n}^\phi(x) &= {z^\phi_n}^TW(x), \quad n=1, \ldots, N - 1\\
J_{N}^\phi(x) &= {z^\phi_N}^TW(x),
\end{align*}
where
\begin{align*}
z^\phi_0 &= \eta^\phi(t_0) - \eta^\phi(t_1)\\
z^\phi_n &= 2\eta^\phi(t_n) - \eta^\phi(t_{n+1}) - \eta^\phi(t_{n-1}), \quad n=1, \ldots, N - 1\\
z^\phi_N &= \eta^\phi(t_N) - \eta^\phi(t_{N-1}).
\end{align*}
Given this relation, the stochastic gradient of Equation {\ref{eq:bound}} can be evaluated 
using $s$ samples from parallel tempering through the formula 
in Section {\ref{subsec:gradient_exponential_linear}} defining:
\begin{align*}
X &= (X_0, \ldots, X_N)\\
W(X) &= [W_0(X_0), W_1(X_0), \ldots, W_0(X_N), W_1(X_N)]^T\\
J(X) &= W(X) \\
\xi_W(\phi) &= [\eta^\phi_0(t_0), \eta^\phi_1(t_0), \ldots, \eta^\phi_0(t_N), \eta^\phi_1(t_N)]^T\\
\xi_J(\phi) &= [z^\phi_{0,0}, z^\phi_{0,1}, \ldots, z^\phi_{N,0}, z^\phi_{N,1}]^T 
\end{align*}
where $X$ is the $s \times N$ matrix of samples from parallel tempering, $W(X)$ is a $s \times 2N$ matrix evaluating $X$ elementwise at the reference and target distributions $W_0$ and $W_1$, $\xi_W(\phi)$ is a $2N \times 1$ vector of annealing coefficients and $\xi_J(\phi)$ is a $2N \times 1$ vector of coefficients defining $J^\phi = [J_0^\phi, \ldots, J_N^\phi]$.

\section{Proof of proposition 2}
For this annealing path family,
\[
W_t(x) = \eta(t)^TW(x).
\]
Therefore, the piecewise twice continuous differentiability of $\eta(t)$ and endpoint
conditions imply that Definition \ref{defn:path} is satisfied.
Next, note that if 
\begin{align*}
\sup_t \max\{\|\eta'(t)\|_2, \|\eta''(t)\|_2\} \leq M,
\end{align*}
then
\begin{align*}
\left|\frac{\dee W_t}{\dee t}\right| &= |\eta'(t)^TW(x)| \leq M\|W(x)\|_2\\
\left|\frac{\dee^2 W_t}{\dee t^2}\right| &= |\eta''(t)^TW(x)| \leq M\|W(x)\|_2,
\end{align*}
and thus by setting $V_1(x) = V_2(x) = M\|W(x)\|_2$ we satisfy
Equations (\ref{eq:V1_bound}) and (\ref{eq:second_deriv}).
Equation (\ref{eq:mgf}) implies Equation (\ref{eq:V1_integrability});
so as long as Equation (\ref{eq:mgf}) holds, the path $\eta$ satisfies 
all of the conditions of Theorem~\ref{thm:general_GCB}. 

Finally, note that $\Omega$ is a convex subset of $\R^2$:
for any nonnegative function $G(x)$,
vectors $\xi_1, \xi_2\in\R^2$, and $\lambda \in [0,1]$,
\begin{align*}
&\exp((\lambda \xi_1 + (1-\lambda)\xi_2)^TW(x)) G(x)\\
= &\left(\exp(\xi_1^TW)G(x)\right)^{\lambda}\left(\exp(\xi_1^TW)G(x)\right)^{1-\lambda}
\end{align*}
and so H\"older's inequality $\int f^\lambda g^{1-\lambda} \leq (\int f)^\lambda(\int g)^{1-\lambda}$
yields log-convexity (and hence convexity). Therefore as long as the endpoints $(0,1)$ and $(1,0)$
are both in $\Omega$, any convex combination of $(0,1)$ and $(1,0)$
is also in $\Omega$, and therefore the linear path $\eta(t) = (1-t, t)$ creates a set 
of normalizable densities and may be included in $\mathcal{A}$.

\section{Empirical support for the SKL surrogate objective function}\label{sec:snr}

\begin{figure*}
\centering
\includegraphics[width=0.4\linewidth]{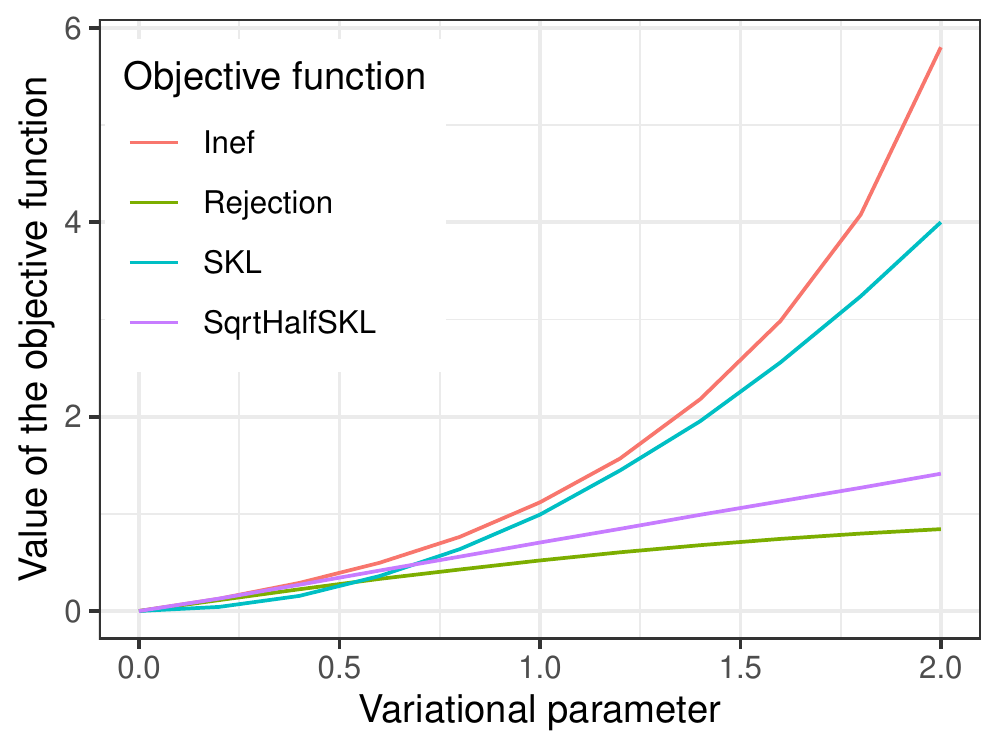}
\includegraphics[width=0.4\linewidth]{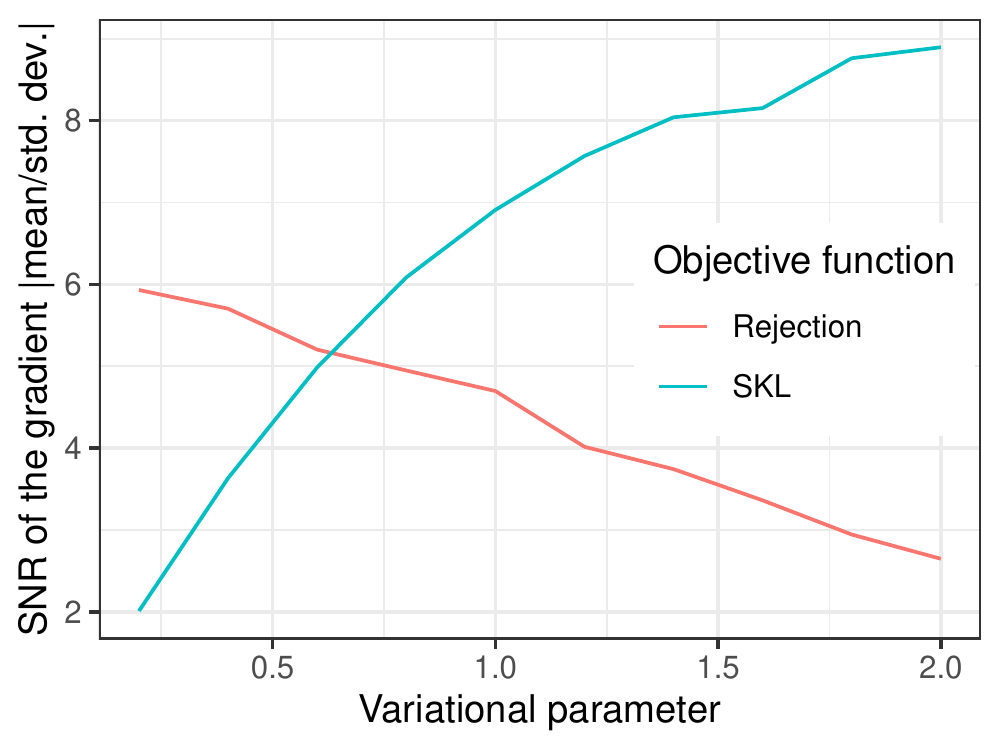}
\caption{Left: objective functions for path optimization in a controlled experiment as a function of a variational parameter $\phi$. Right: signal-to-noise of corresponding gradient estimators on the same range of parameters. }
\label{fig:objectives}
\end{figure*}

Two objective functions were discussed in Section~\ref{sec:general_annealing}: one based on rejection rate statistics, i.e.\ Equation~(\ref{eq:tauopt}), and the symmetric KL divergence (SKL). In this section we perform controlled experiments comparing the signal-to-noise ratio of Monte Carlo estimators of the gradient of these two objectives. 
Let $G$ denote a Monte Carlo estimator of a partial derivative with respect to one of the parameters $\phi_i$. Refer to {\ref{sec:gradient}} for details on the stochastic gradient estimators. In this experiment we use i.i.d.\ samples so that the Monte Carlo estimators are unbiased, justifying the use of the variance as a notion of noise. Hence following  \citet{RainforthKLMIWT18}, we define the signal-to-noise ratio by $\text{SNR} = |\E[G]/\sigma[G]|$, where $\sigma[G]$ denotes the standard deviation of $G$.
We use two chains with one set to a standard Gaussian, the other to a Gaussian with mean $\phi$ and unit variance. We show the value of the two objective functions in Figure~\ref{fig:objectives} (left). The label ``Rejection'' refers to the expected rejection of the swap proposal between the two chains, $r$. 
We also show the square root of half of the SKL (``SqrtHalfSKL''), to quantify the tightness of the bound in Equation~(\ref{eq:bound}), while ``Ineff'' shows the rejection odds, $r/(1-r)$, called inefficiency in \citet{syed_non_reversible_2019}. 

Signal-to-noise ratio estimates were computed for each parameter $\phi_i \in \{0, 1/5, 2/5, \dots, 2\}$. Each gradient estimate uses 50 samples, and to approximate the signal-to-noise ratio, the estimation was repeated 1000 times for each $\phi_i$ and objective function. The results are shown in Figure~\ref{fig:objectives} (right), and demonstrate that in the regime of small rejection ($\lessapprox 30\%$), the gradient estimator based on the rejection objective has a superior signal-to-noise ratio compared to its SKL counterpart. However as $\phi$ increases and the two distributions become farther apart, the situation is reversed, providing empirical support for the surrogate objective for challenging path optimization problems. 

\section{Experimental details}

All the experiments were conducted comparing reversible PT, non-reversible PT and non-reversible PT based on the spline family with $K \in \{2,3,4,5,10\}$. 

Every method was initialized at the linear path with equally spaced schedule, i.e. $\pi_t \propto \pi_0^{1 - t/N}\pi_1^{t/N}$ with $N$ the number of parallel chains. All methods performed one local exploration step before a communication step. 

To ensure a fair comparison of the different algorithms, we fixed the computational budget to a pre-determined number of samples in each experiment. Reversible PT used the budget to perform local exploration steps followed by communication steps. In non-reversible PT  the computational budget was used to tune the schedule according to the procedure described in \citet[Section~5.1]{syed_non_reversible_2019}. For non-reversible PT with path optimization, the computational budget was divided equally over a fixed number of scans of Algorithm \ref{alg:PT_tuning}, where a scan corresponds to one iteration of the for loop.

Optimization of the spline annealing path family was performed using the SKL surrogate objective of Equation {\ref{eq:bound}}. Adagrad was used for the optimization. The gradient was scaled elementwise by its absolute value
plus the value of the knot component. Such scaling was necessary to limit the gradient in the interval $[-1,1]$, stabilizing the optimization and avoiding possible exploding gradients due to the transformation to log space.

To mitigate variance in the results due to randomness, we performed 10 runs of each method and averaged the results across the runs.

\subsection{Gaussian}
This experiment optimized the path between the reference $\pi_0 = N(-1, 0.01^2)$ and the target $\pi_1 = N(1, 0.01^2)$. We used $N=50$ parallel chains initialized at a state sampled from a standard Gaussian distribution.
In this setting, $\pi_t$ has a closed form that can be shown to be $N\left(\frac{\eta_1(t)-\eta_0(t)}{\eta_0(t) + \eta_1(t)}, \left(\frac{0.01^2}{\eta_0(t) + \eta_1(t)}\right)^2\right)$, therefore, in the local exploration step of parallel tempering we sampled i.i.d.\  from $\pi_t$.
The computational budget was fixed at 45000 samples. Non-reversible PT with optimized path divided the budget in 150 scans. Therefore, for every gradient step in Algorithm \ref{alg:PT_tuning}, the gradient was estimated with 300 samples. We used 0.2 as learning rate for Adagrad.

\subsection{Beta-binomial model}
The second experiment was performed on a conjugate Bayesian model. The model prior was $\pi_0(p) = \mathrm{Beta}(180, 840)$. The likelihood was $L(x|p) = \mathrm{Binomial}(x|n, p)$. We simulated $x_1, \ldots, x_{2000} \sim \mathrm{Binomial}(100, 0.7)$, resulting in a posterior distribution $\pi_1(p) = \mathrm{Beta}(140180, 60840)$. The prior is concentrated at 0.176 with a standard deviation of 0.0119. The posterior distribution is concentrated at 0.697 with a standard deviation of 0.001.
We used $N=50$ parallel chains initialized at 0.5.
Also in this experiment it is possible to compute $\pi_t$ in closed form. Let $S = \sum_{i=1}^{2000} x_i$, $R = 2000 \times 100$ then $\pi_t(p)  = \mathrm{Beta}(179\eta_0(t) + (180 + S - 1)\eta_1(t) + 1, 839\eta_0(t) + (840 + N - S - 1)\eta_1(t) + 1)$. Hence, in the local exploration step of parallel tempering we sampled i.i.d.\ from $\pi_t$.
The computational budget was fixed at 45000 samples. Non-reversible PT with optimized path divided the budget in 150 scans. Therefore, for every gradient step in Algorithm \ref{alg:PT_tuning}, the gradient was estimated with 300 samples. We used 0.2 as learning rate for Adagrad.

\subsection{Galaxy data}
The third experiment was a Bayesian Gaussian mixture model applied to the galaxy dataset of \citet{roeder1990density}. We used six mixture components with mixture proportions $w_0, \ldots, w_5$, mixture component densities $N(\mu_i, 1)$ for mean parameters $\mu_0, \ldots, \mu_5$, and a binary cluster label for each data point. We placed a $\mathrm{Dir}(\boldsymbol{1})$ prior on the
proportions, where $\boldsymbol{1} = (1,1,1,1,1,1)$ and a $N(150, 1)$ prior on each of the mean parameters. We did not marginalize the cluster indicators, creating a multi-modal posterior inference problem over 94 latent variables. In this experiment we used $N=35$ chains. Mixture proportions were initialized at $1/6$, mean parameters were initialized at 0 and cluster labels were initialized at 0. The local exploration step involved standard Gibbs steps for the means, indicators,
and proportions. To improve local mixing, we also included an additional Metropolis-Hastings 
step for the proportions that approximates a Gibbs step when the indicators are marginalized. We fixed the computational budget to 50000 samples, divided into 500 scans using 100 samples each. We optimized the path using Adagrad with a learning rate of 0.3.

\subsection{Mixture model}
The fourth experiment was a Bayesian Gaussian mixture model with mixture 
proportions $w_0, w_1$, mixture component densities $N(\mu_i, 10^2)$
for mean parameters $\mu_0,\mu_1$, and a binary cluster label for each data point.
We placed a $\mathrm{Dir}(1, 1)$ prior on the
proportions, and a $N(150, 1)$ prior on each of the two mean parameters. 
We simulated $n=1000$ data points from the mixture $0.3N(100, 10^2) + 0.7N(200,
10^2)$. We did not marginalize the cluster indicators, creating a multi-modal posterior over 1004 latent variables.
We used $N=35$ chains. Mixture proportions were initialized at 0.5, mean
parameters were initialized at 0 and cluster labels were initialized at 0.
The local exploration step involved standard Gibbs steps for the means, indicator variables,
and proportions. To improve local mixing, we also included an additional Metropolis-Hastings 
step for the proportions that approximates a Gibbs step when the indicators are marginalized.
The computational budget was fixed at 25000 samples. Non-reversible PT with
optimized path divided the budget in 50 scans. Therefore, for every gradient
step in Algorithm \ref{alg:PT_tuning}, the gradient was estimated with 500
samples. We used 0.3 as learning rate for Adagrad. Results are shown in Figure \ref{fig:results_mixture}.

\begin{figure}[t!]
\vskip -3.8in
\begin{center}
\centerline{
\includegraphics[width=\columnwidth]{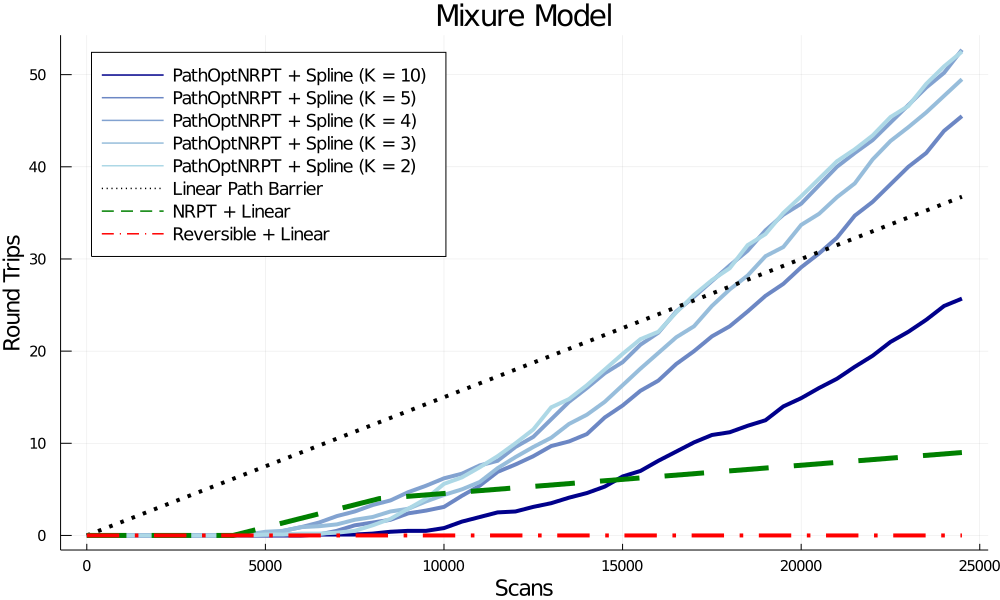}
}
\centerline{
\includegraphics[width=\columnwidth]{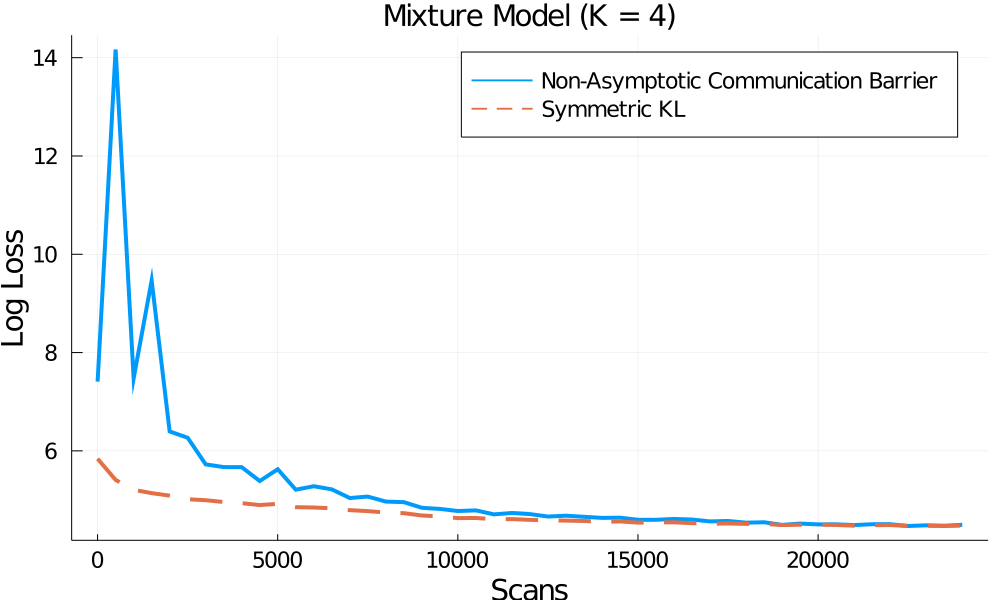}
}
\caption{\textbf{Top:} Cumulative round trips averaged over 10 runs for the spline path with $K=2,3,4,5,10$ (solid blue), NRPT using a linear path (dashed green), and reversible PT with linear path (dash/dot red). The slope of the lines represent the round trip rate. \textbf{Bottom:} Non-asymptotic communication barrier from Equation \ref{eq:nonasympcb} (solid blue) and Symmetric KL (dash orange) as a function of iteration for one run of PathOptNRPT + Spline ($K=4$ knots).}
\label{fig:results_mixture}
\end{center}
\vskip -0.2in
\end{figure}

\end{document}